\pgfplotsset{compat=1.16,
  /pgfplots/ybar legend/.style={
  /pgfplots/legend image code/.code={%
      \draw[##1,/tikz/.cd,yshift=-0.2em]
      (0cm,0cm) rectangle (10pt,4pt);},
  },
}
\definecolor{ctorange}{RGB}{213, 94, 0}
\definecolor{ctblue}{RGB}{0, 114, 178}
\definecolor{ctyellow}{RGB}{240, 228, 66}
\definecolor{ctgreen}{RGB}{0, 158, 115}
\definecolor{ctdarkyellow}{RGB}{230, 159, 0}
\definecolor{key-color}{rgb}{0.8, 0.47, 0.196}
\def\expandafter\UrlBreaks\expandafter{\UrlBreaks\do\/\do\-\do\.} %
\definecolor[named]{Purple}{cmyk}{0.55,1,0,0.15}
\definecolor[named]{DarkBlue}{cmyk}{1,0.58,0,0.21}
\newcolumntype{L}[1]{>{\raggedright\let\newline\\\arraybackslash\hspace{0pt}}m{#1}}
\newenvironment{squishedlist}
{
  \begin{list}{$\bullet$}
   {
     \setlength{\itemsep}{0pt}
     \setlength{\parsep}{2pt}
     \setlength{\topsep}{1.0pt}
     \setlength{\partopsep}{0pt}
     \setlength{\leftmargin}{01.5em}
     \setlength{\labelwidth}{1em}
     \setlength{\labelsep}{0.5em}
   }
}
{
   \end{list}
}
\newenvironment{squishedenumerate}
{
  \begin{enumerate}[label=(\roman*),topsep=0.2em,itemsep=0.2em,leftmargin=1.7em] %
}
{
  \end{enumerate}
}
\newcommand{\noun}[1]{\textsc{#1}}
\newcommand{\graphsurge}[1]{\noun{Graphsurge}}
\newcommand{\customsection}[1]{\noindent\textbf{#1}}
\newcommand{\diffonly}[1]{\texttt{diff-only}}
\newcommand{\scratch}[1]{\texttt{scratch}}
\newcommand{\adaptive}[1]{\texttt{adaptive}}
\newlength{\subsubheadingvspace}
\newlength{\imagebottommargin}
\newtheorem{theorem}{Theorem}[section]
\newcounter{theorem2}
\newtheorem{example}[theorem2]{Example}
\newcounter{theorem3}
\newtheorem{definition}[theorem3]{Definition}
\begin{document}

\fancyhead{}

\title{\texorpdfstring{Graphsurge: Graph Analytics on View Collections Using Differential Computation}{Graphsurge}} %

\author{Siddhartha Sahu}
\email{s3sahu@uwaterloo.ca}
\affiliation{%
  \institution{University of Waterloo, Canada}
  \city{Waterloo}
  \state{ON}
  \country{Canada}
}

\author{Semih Salihoglu}
\email{semih.salihoglu@uwaterloo.ca}
\affiliation{%
  \institution{University of Waterloo, Canada}
  \city{Waterloo}
  \state{ON}
  \country{Canada}
}

\begin{abstract}
This paper presents the design and implementation of a new open-source view-based graph analytics system called \graphsurge{}. \graphsurge{} is designed to support applications that analyze multiple snapshots or views of a large-scale graph. Users program \graphsurge{} through a declarative \emph{graph view definition language} (GVDL) to create views over input graphs and a \emph{Differential Dataflow}-based programming API to write analytics computations.
A key feature of GVDL is the ability to organize views into \emph{view collections}, which allows \graphsurge{} to automatically share computation across views, without users writing any incrementalization  code, by performing computations differentially. We then introduce two optimization problems that naturally arise in our setting. First is the \emph{collection ordering problem} to determine the order of views that leads to minimum differences across consecutive views. We prove this problem is NP-hard and show a constant-factor approximation algorithm drawn from literature. Second is the \emph{collection splitting} problem to decide on which views to run computations differentially vs from scratch, for which we present an adaptive solution that makes decisions at runtime. We present extensive experiments to demonstrate the benefits of running computations differentially for view collections and our collection ordering and splitting optimizations.

\vspace{-5pt}

\end{abstract}

\begin{CCSXML}
    <ccs2012>
        <concept>
            <concept_id>10002951.10002952.10002953.10010146</concept_id>
            <concept_desc>Information systems~Graph-based database models</concept_desc>
            <concept_significance>500</concept_significance>
            </concept>
        <concept>
            <concept_id>10002951.10002952.10003190.10003195</concept_id>
            <concept_desc>Information systems~Parallel and distributed DBMSs</concept_desc>
            <concept_significance>500</concept_significance>
            </concept>
        <concept>
            <concept_id>10002951.10002952.10003190.10003205</concept_id>
            <concept_desc>Information systems~Database views</concept_desc>
            <concept_significance>500</concept_significance>
            </concept>
        <concept>
            <concept_id>10002951.10002952.10003190.10010841</concept_id>
            <concept_desc>Information systems~Online analytical processing engines</concept_desc>
            <concept_significance>500</concept_significance>
            </concept>
    </ccs2012>
\end{CCSXML}

\ccsdesc[500]{Information systems~Graph-based database models}
\ccsdesc[500]{Information systems~Parallel and distributed DBMSs}
\ccsdesc[500]{Information systems~Database views}
\ccsdesc[500]{Information systems~Online analytical processing engines}

\keywords{Graph views; View collection; Collection ordering; Adaptive execution; Differential computation; Dataflow computation}

\maketitle

\section{Introduction}%
\label{sec:introduction}

A variety of applications, such as fraud detection, risk assessment
and recommendations from telecommunications,
finance, social networking, biological brain networks, and many other fields, process large-scale connected data among
different entities~\cite{sahu:extended-survey}. Developers of these
applications naturally model such connected data as graphs. Many of these
applications require the ability to analyze different snapshots or \emph{views}
of a large-scale static graph, often based on selecting subsets of nodes or edges that satisfy different predicates. We first review several of these applications that motivate
our current work. Figure~\ref{fig:call-graph-network} shows a call graph that we use as a running example throughout the paper. Customers are represented as nodes with \texttt{profession} and \texttt{city} properties. Phone calls are represented as edges between nodes with \texttt{duration} and \texttt{date} properties, written in curly brackets, respectively.

\begin{figure}[t!]
	\centering
	\captionsetup{justification=centering}
\centering
\begin{tikzpicture}[scale=.6, transform shape,->,>=stealth', shorten >=1pt, auto,node distance=3.5cm, thick, main node/.style={circle,draw,font=\Large\bfseries,minimum size=1cm, fill=cyan}, text node/.style={rectangle,draw}]
	\node[main node] (5) {$5$};
	\node[text node] (text5)[node distance = 2.5mm, yshift=-1.2cm, xshift=0.8cm, rotate=-60, right of=5] {{NY, Doctor}};
	\node[main node] (8) [left of=5] {$8$};
	\node[text node] (text8)[node distance=10mm, above of=8] {LA, Lawyer};
	\node[main node] (2) [right of=5] {$2$};
	\node[text node] (text2)[node distance =5mm, yshift=-1cm, xshift=-.6cm, rotate=60, left of=2] {{LA, Doctor}};
	\node[main node] (6) [above of=5] {$6$};
	\node[text node] (text6)[node distance=10mm, above of=6] {LA, Engineer};
	\node[main node] (1) [above of=2] {$1$};
	\node[text node] (text1)[node distance=10mm, above of=1] {LA, Engineer};
	\node[main node] (3) [below of=2] {$3$};
	\node[text node] (text3)[node distance=10mm, below of=3] {LA, Engineer};
	\node[main node] (7) [below of=8] {$7$};
	\node[text node] (text7)[node distance=10mm, below of=7] {NY, Lawyer};
	\node[main node] (4) [below of=5] {$4$};
	\node[text node] (text4)[node distance=10mm, below of=4] {NY, Lawyer};
	\path [every node/.style={font=\sffamily\large}]
	([yshift=-1.5mm]5.east) edge node[midway, below, sloped]{\{$7, 2015$\}}  ([yshift=-1.5mm]2.west)
	([yshift=1.5mm]2.west) edge node[midway, above, sloped]{\{$19, 2019$\}}  ([yshift=1.5mm]5.east)
	([xshift=1.5mm]8.south) edge node[midway, below, sloped]{\{$13, 2019$\}}  ([xshift=1.5mm]7.north)
	([xshift=-1.5mm]8.south) edge node[midway, above, sloped]{\{$18, 2019$\}}  ([xshift=-1.5mm]7.north)
	(8) edge node[midway, above, sloped]{\{$6, 2019$\}}  (5)
	(5) edge node[midway, below, sloped]{\{$18, 2019$\}}  (4)
	(4) edge node[midway, below, sloped]{\{$32, 2017$\}}  (3)
	(2) edge node[midway, above, sloped]{\{$1, 2010$\}}  (3)
	(1) edge node[midway, below, sloped]{\{$10, 2018$\}}  (5)
	([xshift=1.5mm]1.south) edge node[midway, below, sloped]{\{$3, 2019$\}}  ([xshift=1.5mm]2.north)
	([xshift=-1.5mm]1.south) edge node[midway, above, sloped]{\{$12, 2017$\}}  ([xshift=-1.5mm]2.north)
	([xshift=-1.5mm]6.south) edge node[midway, below, sloped]{\{$7, 2018$\}}  ([xshift=-1.5mm]5.north)
	([xshift=1.5mm]5.north) edge node[midway, below, sloped]{\{$2, 2013$\}}  ([xshift=1.5mm]6.south)
	(6) edge node[midway, below, sloped]{\{$4, 2019$\}}  (1)
	(5) edge node[midway, below, sloped]{\{$34, 2019$\}}  (7);
\end{tikzpicture}
    \caption{Example phone call graph.}%
    \label{fig:call-graph-network}
    \vspace{-17pt}
\end{figure}
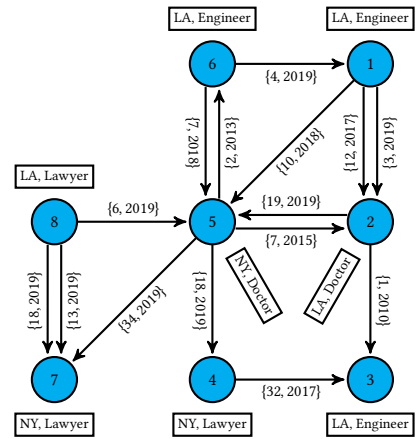

\begin{example}\label{ex:network-analysis-over-evolving-graphs}
\emph{
Researchers and practitioners study the changes in structural properties of graphs across different
views.
A popular example is historical analyses of graphs where nodes or edges have some time property. A network scientist might study the history of
the connectivity of the call graph from Figure~\ref{fig:call-graph-network} and compute one view of the graph for each year between 2010 and 2020.
Similarly, the analyst can study the history of more complex views,
where each view contains  only the calls up to certain duration, say for $\le$ 1, 5, or 10 minutes.
A classic example of such analyses from literature is
 reference~\cite{leskovec:diameter} that studied the component size, vertex degrees, and diameters of different time-windows in time-stamped citation and web graphs and under different selection criteria of vertices, e.g., those belonging to a particular component or without incoming edges. In other settings, applications may study the history of social or e-commerce networks to find the trends in the centralities or importance rankings of nodes across different snapshots.}
\end{example}

\begin{example}\label{ex:contingency-analysis}
\emph{Perturbation or contingency analysis is a popular analysis done on real-world graphs to study the resilience of graphs to different failure or perturbation scenarios.
For example, in network analyses in neuroscience, scientists ``lesion'' anatomical or functional brain networks by deleting nodes or edges randomly or in a targeted way~\cite{sporns:brain}, e.g., by deleting subsets of the highest degree nodes, and study the effects of these lesions on the average path lengths between different nodes in these graphs. Similarly, our recent user survey~\cite{sahu:extended-survey} reported an application~\cite{zhao:contingency-analysis} from power grids, which are modeled as graphs.
 The application periodically takes a static snapshot of the grid and constructs multiple views of this graph, each representing a failure scenario through the removal or updates
of sets of nodes or edges.
Several computations, such as power or path analysis, are performed to analyze
the effects of each scenario.
Similar contingency analysis applications have also been described in references from many other fields, such as communication~\cite{sterbenz:network-resilience}, transportation~\cite{ip:transportation-resilience}, or other biological networks~\cite{yadav:nexcade}.}
\end{example}

\noindent  These and many others applications require constructing multiple, sometimes hundreds of, views of a static input graph, and compute the same computations across each view. Without a system support, users 
would need to resort to running these computations from scratch on each individual view, which can be very inefficient.
 A system that is able to share computation across views would be of immense use in enabling the efficient development of these applications. We have developed an open-source new analytics system we call \graphsurge{} for this purpose.\footnote{Code is available at \url{https://github.com/dsg-uwaterloo/graphsurge}.}
\graphsurge{} is a full-fledged analytics system that treats graph views as first-class.
The system has a declarative view definition language called \emph{GVDL} with which users can define: (1) individual views; or (2) collections of graph views, which we call {\em view collections}.
Users program \graphsurge{} by writing batch static analytics computations using a dataflow-based API\@. When users execute their programs on view collections, \graphsurge{} automatically shares computation across the views to improve performance by leveraging {\em differential computation}, which can have significant performance benefits.
For example, in a historical analysis application that analyzes
the evolution of a Stack Overflow dataset over 5 years, running a strongly connected components algorithm from scratch  takes 431s, while the same analysis takes merely 43s using \graphsurge{} (see Section~\ref{sec:exp-diff-scratch}).

\graphsurge{} is developed on top of the {\em Timely Dataflow}~\cite{murray:naiad,timely-dataflow-rust} system and its {\em Differential Dataflow} layer~\cite{differential-dataflow-rust, mcsherry:differential-dataflow}, which implements the differential computation technique~\cite{abadi:differential-foundations}.
Differential computation is a black-box technique to incrementally maintain arbitrary, possibly iterative, static dataflow programs, across evolving data sets. Prior literature has used differential computation primarily for maintaining streaming (i.e., continuous) computations for evolving datasets, e.g., to maintain relational queries over a changing database~\cite{gobel:relational-differential-dataflow}.
By static dataflow programs, we refer to those that are designed to run on static datasets and do not contain any incrementalization code. Differential computation is a powerful technique that can automatically incrementalize such static programs.

Our approach is based on the observation that although \graphsurge{} processes static graphs, one can organize view collections as \emph{edge difference sets} that represent them similar to an evolving graph. Specifically, \graphsurge{} first orders
a view collection $C$ with $k$ views and gives each view an index $GV_1$, ..., $GV_k$ (this step is discussed momentarily). 
Then, the system runs a Timely Dataflow program that first materializes $GV_1$ and for each view $GV_i$, materializes only $GV_i$'s {\em edge differences}, i.e., edge additions and deletions, compared to $GV_{i-1}$. This edge difference-based storage compactly materializes all of the views in the $C$ and represents $C$ as an evolving graph over $k$ time steps. 
Finally, when running the same analytics computation across
the views of $C$, \graphsurge{} feeds the user program and the
computed difference sets for each $GV_i$ to Differential  Dataflow, which shares
computation across views internally by running the program differentially across the views.

Unlike streaming applications on Differential Dataflow or specialized graph streaming systems, such as GraphBolt~\cite{mariappan:graphbolt},  %
 the static nature of the views defined in \graphsurge{} gives the system several interesting optimization opportunities, which we next discuss.%
\vspace{5px}
\noindent {\bf Collection Ordering Problem:}
Intuitively, once a view collection is ordered as consecutive edge difference sets, making the neighboring views more similar
allows differential computation to share more computation across views.
In streaming or continuous query processing systems, a system
has no choice over the order of the updates that come in, so effectively no choice
as to the order of the snapshots on which a computation has to be performed.
Instead, the static nature of the views gives \graphsurge{}
an opportunity to order the views as a preprocessing step
and put similar views close to each other.
We show that this problem is NP-hard, but
show a constant-factor approximation algorithm that we have integrated into \graphsurge{}.
In our evaluations, we show that our collection ordering optimization can lead to up to 10.1x runtime improvements when good orderings are unclear.%

\vspace{5px}
\noindent {\bf Collection Splitting Problem:} Even after a system has found a good ordering that minimizes differences between views and maximizes computation sharing, there are cases when differentially maintaining the computation for a view $GV_j$, given the differential computations for $GV_0,...,GV_{j-1}$, might be slower than rerunning $GV_j$ from scratch. We call this the collection splitting problem, as rerunning the computation from scratch at $GV_j$ effectively splits the view collection into 2 sub-collections, each of which would be run differentially (in absence of further splittings). Several factors that can trigger this behavior, such as the analytics computation that is executed may be unstable or the views may not be similar enough to benefit from differential computation sharing. 
We show that a system can monitor the runtimes of each view and the sizes of the edge differences, and make effective decision to decide whether to run each view from scratch or differentially. We show that our collection splitting optimization can detect cases when running all views differentially or from scratch is optimal, and lead to up to 1.9x performance  improvements over the better of these baselines when neither is optimal.

Our contributions are as follows. We developed \graphsurge{}, an end-to-end open-source graph
analytics system that is designed for applications that perform batch analytics computations 
over multiple views of the same graph. \graphsurge{} is developed on top of the Timely and Differential Dataflow systems.
A key component of \graphsurge{} is the support for organizing multiple views into 
view collections, which are materialized in a compact manner as edge difference sets.
View collections allow \graphsurge{} to share computation across the views by running a computation differentially 
on all of the views in a collection. We identify and provide techniques for two unique optimization problems that arise in the context of \graphsurge{}: (i) collection ordering, which orders the views in a view collection to minimize the number of edge differences; and (ii) collection splitting, which adaptively decides to run certain views from scratch instead of differentially.

\vspace{-5pt}
\section{Background}\label{sec:background}

\noindent {\bf Property Graph Model:}
\graphsurge{} uses the property graph model, where data consists of a set of nodes and directed edges and arbitrary key-value properties on nodes and edges. Our current implementation supports string, integer, and boolean properties.

\noindent {\bf Timely Dataflow (TD)~\cite{timely-dataflow-rust, murray:naiad}:} TD is a system for 
general, possibly cyclic, i.e., iterative, data-parallel computations that are expressed as
a combination of \emph{timely operators}, such as \emph{map}, \emph{reduce}, and \emph{iterate}, that transform one or more input data streams 
to an output stream. TD is  an inherently streaming system but supports bulk synchronous computations by giving programs the ability to synchronize operators at different 
{\em timestamps}, which are vectors of integers, ${<}i_1, i_2, \dots, i_k{>}$, 
where each $i_j$ can represent different nested
iterations of the computation or versions of input data streams (an important feature for differential computations).
Similar
to systems such as MapReduce and Spark, TD
automatically scales  computations to multiple workers, within or across compute nodes, where each worker
processes only a partition of the data.
\graphsurge{} uses TD directly to create individual views and view collections, and indirectly by using Differential Dataflow to run computations. 

\noindent {\bf Differential Dataflow (DD)~\cite{mcsherry:differential-dataflow, differential-dataflow-rust}:}
DD is a system  built on
top of TD for incrementally maintaining the outputs of arbitrary dataflow computations over evolving  inputs. DD is based on the differential
computation model~\cite{abadi:differential-foundations}. Consider the Bellman-Ford~\cite{cormen:clrs} algorithm for computing single source shortest paths (SSSP) from a source $s$ to all other vertices in a graph $G$. Let $c(u, v)$ be the cost of an edge in $G$. Initially $s$ has a distance of $0$ and every other vertex has a distance of $\infty$. Iteratively, until a fixed point, each vertex $x$ whose distance has changed produces for each of its outgoing neighbor $y$ a possible distance ``message'' $d(x) + c(x, y)$. Vertices update their distances by taking the minimum of their latest distance and these messages.  Figure~\ref{fig:dataflow} shows a dataflow implementation of this computation consisting of two original inputs, \texttt{Edges} (\texttt{E}) and \texttt{Distances} (\texttt{D}), and two operators: (i) a \texttt{JoinMsg} operator taking as input edge tuples $(u, v, c(u,v))$ and latest vertex distances  and outputting the messages \texttt{M}; (ii) a {\em UnionMin} operator taking latest distances and messages for each $v$ and producing (possibly new) distances. 

\begin{figure}[!t]
\centering
\includegraphics[width=0.4\textwidth]{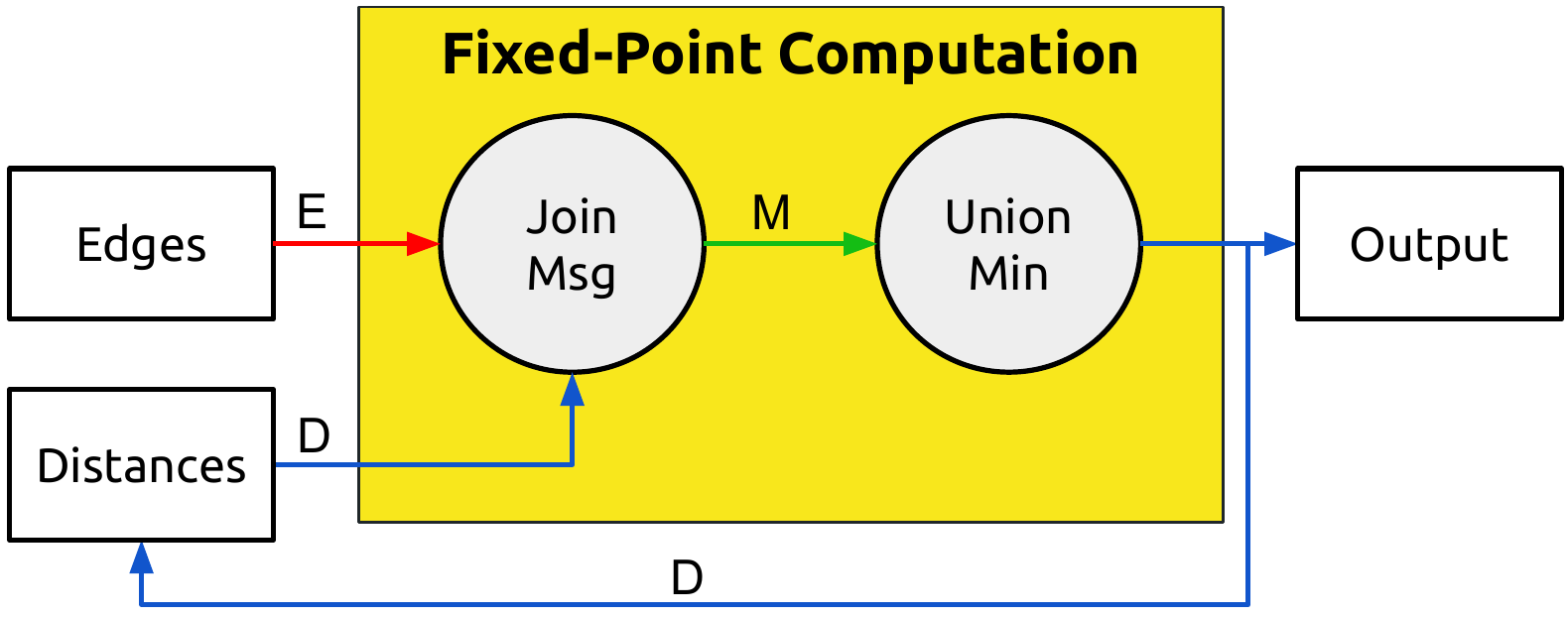}
\vspace{-7pt}
\caption{Dataflow of the Bellman-Ford algorithm for SSSP.}\label{fig:dataflow}
\vspace{-10pt}
\end{figure}

Given a dataflow computation, DD stores the state of the input and output data streams of each operator as \emph{partially ordered timestamped differences} and maintains these differences as the original inputs to the dataflow, e.g., stream \texttt{E} in our example, change. In the above computation, the timestamps are two-dimensional ${<}\text{\emph{graph-version}}$, $\text{\emph{SSSP-iteration}}{>}$ tuples because the streams, specifically \texttt{D}, can change for two separate reasons: (1) changes to \texttt{E}; and (2) changes between iterations of the Bellman-Ford computation. We note that for each timestamp there is a set of differences, and these sets are partially ordered. However, there is {\em no order} among the differences with the same timestamp.

\begin{figure}[!t]
    \centering
    \includegraphics[width=0.31\textwidth]{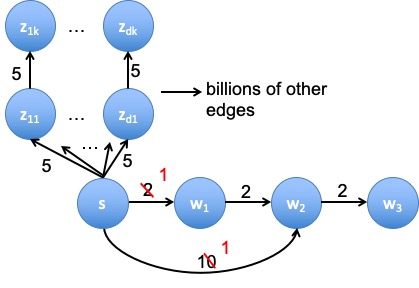}
    \vspace{-12pt}
    \caption{Example stream \texttt{Edges} (\texttt{E}) for the dataflow.}%
    \label{fig:running-ex-input}
    \vspace{-10pt}
    \end{figure}

For a stream $S$, let $S_t$ represent the state of $S$ at timestamp $t$ and let $\delta S_t$ be the difference to $S$ at $t$ (defined momentarily). Consider an operator with a single input stream $A$ and output stream $B$.  DD only keeps track of the differences $\delta A_t$ and $\delta B_t$ ensuring $A_t$ and $B_t$ can be constructed for each $t$ by summing their differences {\em prior} to $t$ according to the partial order of the timestamps, i.e., $A_t = \cup_{s \le t} \delta A_s$ and  $B_t = Op(\cup_{s \le t} \delta A_s)$. 
These equations imply that $\delta A_t = A_t -  \cup_{s < t} \delta A_s$ and $\delta B_t = Op(\cup_{s \le t} \delta A_s) - \cup_{s < t} \delta B_s$, which is how DD computes and stores $\delta A_t$ and $\delta B_t$. Streams in DD are multisets of tuples and the tuples in $\delta S_t$ can have negative multiplicities, implying deletions of tuples.
Table~\ref{fig:bellman-ford-diffs} shows the example of differences to the  $E$, $D$, and $M$ streams in the Bellman-Ford dataflow as the graph in Figure~\ref{fig:running-ex-input} is updated first by changing $(s, w_1)$'s cost from $2$ to $1$ and then $(s, w_2)$'s cost from $10$ to $1$. The input graph contains billions of edges among the $z_{jk}$ vertices, and we denote the difference sets relevant to them in Table~\ref{fig:bellman-ford-diffs} by $\delta Z_E$, $\delta Z_D$, and $\delta Z_{M}$. 
Readers can verify that in Table~\ref{fig:bellman-ford-diffs}, $S_t = \Sigma_{s \le t} \delta S_s$ for every stream and $t$ for the $w_i$ component of the graph.

\begin{table}[t]
    \centering
    \setlength\tabcolsep{1.0pt}
    \setlength\extrarowheight{1pt}
    \small
    \begin{tabular}{@{\hskip-8pt}c|@{}c@{}|c|L{3.5cm}|L{1.5cm}|L{1.7cm}|} 
    \multicolumn{6}{c}{
    \begin{tikzpicture}
    \bfseries
    \draw[->, line width=0.2mm] (-2,0) --node  [above] {\small{Time/Graph Updates}} (2,0);
    \end{tikzpicture}
    } \\\cline{2-6}
    &&&\centering{\textbf{$G_0$}} &  \centering{\textbf{$G_1$}} & \multicolumn{1}{|>{\centering\arraybackslash}m{1.5cm}|}{\textbf{$G_2$}} \\\cline{2-6}
    \multirow{12}{*}{
    \vspace{-2.5cm}
    \begin{tikzpicture}
    \bfseries
    \centering
    \draw[<-, line width=0.2mm] (0,-2) --node [midway, above, sloped] {\small{B-Ford iterations}} (0,2);
    \end{tikzpicture}
    }
    &\multirow{3}{*}{\textbf{0}} & \textcolor{red}{\textbf{$\delta E$}} 
                        &$+(s, w_1, 2),$ $+ (s, w_2, 10)$,
                        $+(w_1, w_2, 2), dZ_E$
                        &\textcolor{red}{$-(s, w_1, 2)$}, $+(s, w_1, 1)$ 
                        &\textcolor{red}{$-(s, w_2, 10)$}, 
                        $+(s, w_2, 1)$ \\\cline{3-6}
                    &   & \textcolor{blue}{$\delta D$} & $+(s,0), + (w_1, \infty),$
                        $+(w_2, \infty),+(w_3, \infty),$ $dZ_D$& $\varnothing$& $\varnothing$ \\\cline{3-6}
                    &  &\textcolor{teal}{\textbf{$\delta M$}} &$+(w_1, 2),+(w_2, 10),$ $dZ_M$ & \textcolor{red}{$-(w_1, 2)$}, $+(w_1, 1)$ 
                        & \textcolor{red}{$-(w_2, 10)$}, $+(w_2, 1)$\\\cline{2-6}
    &\multirow{3}{*}{\textbf{1}} & \textcolor{red}{\textbf{$\delta E$}} 
                        & $\varnothing$
                        &$\varnothing$ 
                        &$\varnothing$ \\\cline{3-6}
                        & & \textcolor{blue}{$\delta D$} 
                        & \textcolor{red}{$-(w_1, \infty)$}$, +(w_1, 2)$,
                        \textcolor{red}{$-(w_2, \infty)$}, $+(w_2, 10)$, $ dZ_D$
                        & \textcolor{red}{$-(w_1, 2)$}, $+(w_1, 1)$
                        & \textcolor{red}{$-(w_2, 10)$}, $+(w_2, 1)$ \\\cline{3-6}
                        & &\textcolor{teal}{\textbf{$\delta M$}} 
                        &$+(w_2, 4),+(w_3, 12),$ $dZ_M$
                        & \textcolor{red}{$-(w_2, 4)$}, $+(w_2, 3)$ 
                        & \textcolor{red}{$-(w3, 12)$}, $+(w_3, 3)$\\\cline{2-6} 
    &\multirow{3}{*}{\textbf{2}} & \textcolor{blue}{\textbf{$\delta D$}} 
                        & \textcolor{red}{$-(w_2, 10)$}, $+(w_2, 4)$,
                        \textcolor{red}{$-(w_3, \infty)$}, $+(w_3, 12)$, $ dZ_D$
                        & \textcolor{red}{$-(w_2, 4)$}, $+(w_2, 3)$
                        &  $+(w_2, 10)$, \textcolor{red}{$-(w_2, 3)$},
                        \textcolor{red}{$-(w_3, 12)$}, $+(w_3, 3)$ \\\cline{3-6}
                        & &\textcolor{teal}{\textbf{$\delta M$}} 
                        &\textcolor{red}{$-(w_3, 12)$}, $+(w_3, 6)$, $dZ_M$
                        & \textcolor{red}{$-(w_3, 6)$}, $+(w_3, 5)$ 
                        & \textcolor{red}{$-(w_3, 5)$}, $+(w_3, 12)$\\\cline{2-6} 
    &\multirow{3}{*}{\textbf{3}} & \textcolor{blue}{\textbf{$\delta D$}} 
                        & \textcolor{red}{$-(w_3, 12)$}, $+(w_3, 6)$,  $dZ_D$
                        & \textcolor{red}{$-(w_3, 6)$}, $+(w_3, 5)$
                        & \textcolor{red}{$-(w_3, 5)$}, $+(w_3, 12)$ \\\cline{3-6}
                        & &\textcolor{teal}{\textbf{$\delta M$}} 
                        & $dZ_M$ & $\varnothing$ & $\varnothing$ \\\cline{2-6}
    &... & ... & rest contains $dZ_D$, $ dZ_M$
                        &rest is $\varnothing$ 
                        &rest is $\varnothing$ \\\cline{2-6}
    &k & ... & ... &... &... \\\cline{2-6}
    \end{tabular}
    \vspace{4pt}
    \caption{Differences in the SSSP example. $\delta E$ is $\emptyset$ and omitted after SSSP iteration 0 in each column.}%
    \label{fig:bellman-ford-diffs}
    \vspace{-25pt}
    \end{table}

We end this section with three important properties of differential computation and DD:

\noindent {\em Property 1: DD programs are written for static input datasets.} The Bellman-Ford dataflow program from Figure~\ref{fig:dataflow} computes shortest paths on a {\em static} graph. Specifically, it contains no logic for maintaining the computation if the input graph changed. The primary utility of DD is that it can incrementalize any computation without requiring programmers to write any special code to incrementalize their programs, which is a very hard task for many programmers. 

\vspace{2pt}
\noindent {\em Property 2: Maintaining computations differentially can save a lot of computation.} If there are no differences to the inputs of an operator $O$ at version $<$$G_{j}, i$$>$, $O$'s output is guaranteed to have an empty difference, which allows DD to maintain the differences without performing any computation. Indeed, after the $G_0$ column is computed,
Table~\ref{fig:bellman-ford-diffs} shows all the merely 30 updates to differences that DD computes, despite the fact that we assumed that the graph contains billions of edges. This is because, DD automatically notices that the results of computations working on data stream partitions related to vertices $z_{ij}$ effectively cannot have changed after updates. As a result, DD avoids rerunning any computation for those parts, effectively sharing computation across the three versions of $G$.

\vspace{2pt}
\noindent {\em Property 3: DD can also save computation when running iterative programs on static input datasets that run until a fixed point.} 
For example, even if the graph was not changing, implementing Bellman-Ford algorithm in dataflow systems requires a logic that is similar to differential computation, where some user-specific code has to check whether the computation has reached a fixed point.%

\vspace{-5pt}
\section{Graphsurge System}\label{sec:graphsurge}

\graphsurge{} is a system for performing analytics on views over static input graphs. The system is implemented in Rust. Figure~\ref{fig:architecture} shows the architecture of \graphsurge{}. Users program \graphsurge{} through two interfaces: (i)  A declarative {\em graph view definition language} (GVDL) to define individual views and view collections over base graphs; and (ii) A DD-based API to write dataflow programs for graph analytics computations that consume the difference stream of a view or view collection. \graphsurge{} uses TD and DD as its execution layer for both creating views as well as for running user-specified analytics programs on views. As such, dataflows written in TD and DD can be automatically parallelized both in a single multi-core machine and in a distributed cluster.

Users import \emph{base input graphs} to \graphsurge{} through CSV files that contain the nodes and edges of the graph and their properties. Upon loading, nodes and edges are given unique 32-bit IDs and stored as a \emph{node stream} and {\em edge stream} in the \emph{graph store} (GStore), respectively. Each edge in an edge stream is a (\texttt{eID}, \texttt{sID}, \texttt{dID}, \texttt{key$_1$}, \texttt{val$_1$}, $\ldots$) tuple, where \texttt{eID} is the edge ID\@, \texttt{key$_i$} and \texttt{val$_i$} are the key-value properties of the edge, and \texttt{sID} and \texttt{dID} are source and destination node IDs that point to offsets in the node stream.

When a user runs a GVDL query, a TD program is executed that reads the edge stream from the GStore, applies the filter predicates to generate the difference sets for the corresponding view or view collection, and stores them in the \emph{view and collection store} (VCStore). When the user then runs an analytics computation on a view or view collection, a DD program reads the difference sets from the VCStore to run the user-defined dataflow. In a distributed cluster, both GStore and VCStore are replicated across all the machines. Each thread of a running TD and DD dataflow operates on a logical partition of the edge stream when creating views or view collections, and of the difference sets when running an analytics computation. The parallel running operators in the TD and DD dataflows perform read-only queries to GStore and VCStore and do not require any locks or coordination.

\begin{figure}[t]
    \centering
    \includegraphics[width=\columnwidth]{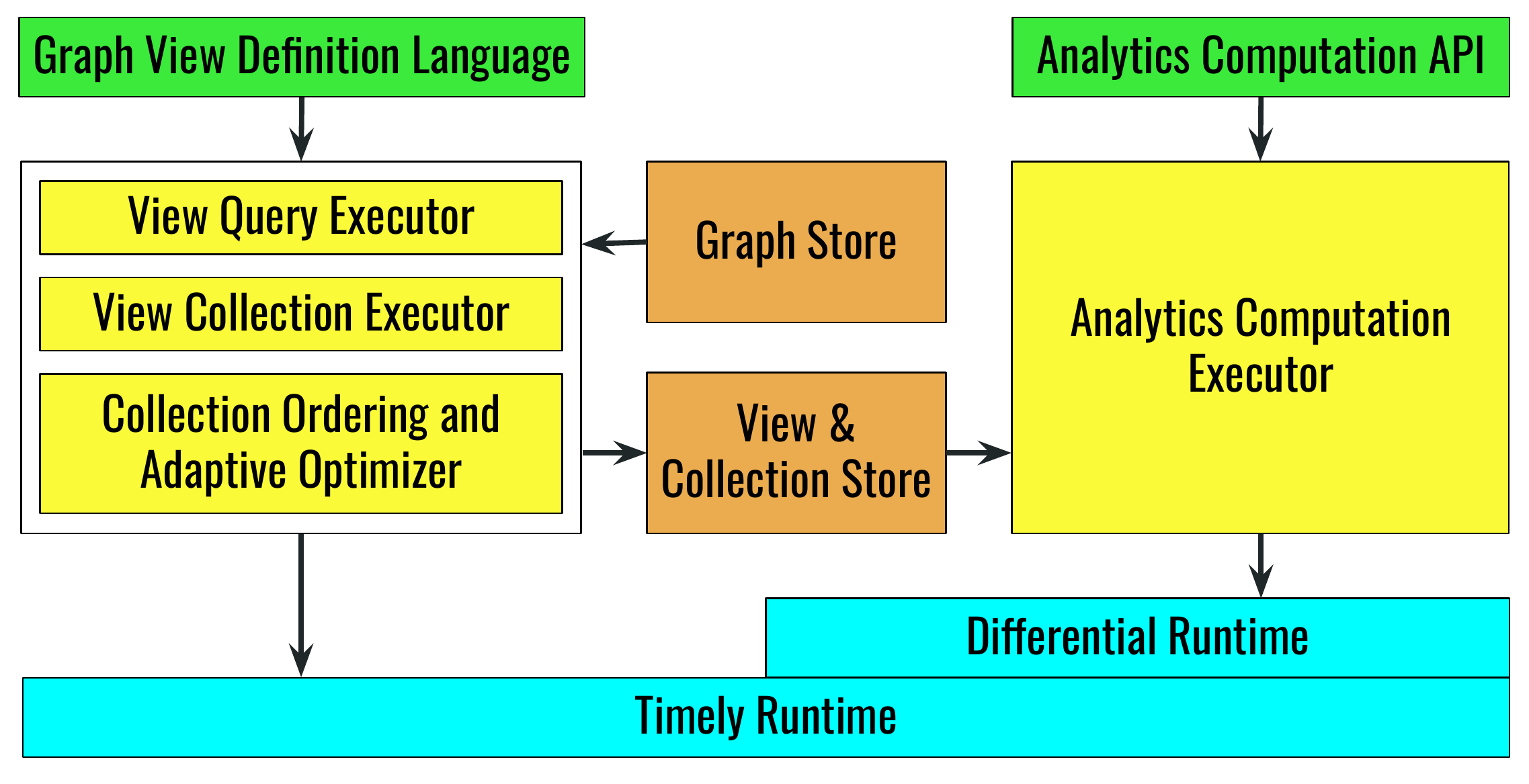}
    \vspace{-15pt}
    \caption{\graphsurge{} architecture.}%
    \label{fig:architecture}
    \vspace{-5pt}
\end{figure}

\vspace{-5pt}
\subsection{Individual Views}%
\label{subsec:individual-views}

\subsubsection{Individual View Definition}
GVDL is a simple language to define views
over base graphs. 
A GVDL query to create views has a single \texttt{Where} clause that specifies a  predicate on an input graph (or another materialized view) that specifies the edges of the output view.  Predicates can be arbitrary conjunctions or disjunctions and access the properties of both source and destination nodes as well as the edges.

\begin{example}\label{ex:filtered-view} 
{\em
Listing~\ref{lst:filtered-view} shows a view an analyst can construct on 
our running example \texttt{Calls} graph. The view
consists of calls made in California in 2019 with duration longer than 10 minutes.
}
\end{example}

\begin{lstlisting}[float, 
    caption={Example GVDL view query.},
    label={lst:filtered-view}, belowskip=-0.8 \baselineskip]
create view CA-Long-Calls on Calls
edges where src.state = 'CA' and dst.state = 'CA'
        and duration > 10 and year = 2019
\end{lstlisting}

The views that users can express in GVDL are noticeably simple, but they 
are enough to express the use cases we have explored
in this paper. In Section~\ref{subsec:view-collections}, we will elaborate 
on a second advantage of keeping GVDL simple when we discuss
how we store multiple views that are organized in a view collection in a compact manner.
GVDL queries that define individual views are compiled into TD dataflow programs in a straightforward fashion. The dataflow consists of a filter operator to apply the user-specified predicates to the edge stream and compute the difference sets. The output of the program is materialized as a stream in the VCStore.

\subsubsection{Analytics Computations on Individual Views}%
\label{subsec:analytics-api}
\begin{lstlisting}[float,
    caption={Differential Computation API.},
    label={lst:diff-comp-api}, belowskip=-0.8 \baselineskip]
pub trait GraphSurgeComputation {
    type Results;
    fn graph_analytics(input_stream: &InputStream)
        -> Collection<Self::Results> }
\end{lstlisting}

Users write arbitrary DD dataflow programs for performing analytics on their views with the constraint that one of the inputs to the dataflow is the \graphsurge{}-specific difference stream for the view. \graphsurge{} exposes a Rust interface to users with a \texttt{graph\_analytics} function, inside which users can write arbitrary DD programs that are expected to return per-vertex user-defined outputs, such as the connected component ID of each vertex in a connected components analytics. Listing~\ref{lst:diff-comp-api} shows the interface of the \texttt{graph\_analytics} function. Users invoke their programs through a separate command line and specify their \texttt{graph\_analytics} function and the view on which to run this function. \graphsurge{}'s \emph{analytics computation executor} calls the users' \texttt{graph\_analytics} function to obtain a computation dataflow, and feeds the edge stream corresponding to the view into it. When the computation is executed on a single view, the entire edge stream is fed into the dataflow at once. How the computation is executed on a view collection is more involved and described in Section~\ref{subsec:view-collections}.

\vspace{-5pt}
\subsection{View Collections}\label{subsec:view-collections}

\subsubsection{View Collection Definition}\label{subsubsec:view-collections}

To share analytics computations across multiple views of the same graph, \graphsurge{} allows users to organize views in a {\em view collection}. A view collection organizes a set of views as a single timestamped\footnote{We use the term timestamp to follow differential computations' terminology. This should not be confused with any application-specific ``time'' property, such as the \texttt{year} property we use in our running example.} {\em edge difference stream}~$C$, where each view corresponds to a state of the stream at a particular timestamp~$t$.

\begin{lstlisting}[float,
    caption={Example GVDL view collection query.},
    label={lst:ex-view-collection-gvdl}, belowskip=-0.8 \baselineskip]
create view collection call-analysis on Calls
    [$GV_1$: ID $<$ 100],
    [$GV_2$: ID $\ge$ 50 and ID $<$ 199],
    [$GV_3$: ID $\ge$ 10 and ID $<$ 100],
    [$GV_4$: ID $\ge$ 60 and ID $<$ 199]
\end{lstlisting}

\begin{example}\label{ex:view-collection} 
{\em
Listing~\ref{lst:ex-view-collection-gvdl} shows a simple demonstrative GVDL query defining a view collection with four views on our \texttt{Calls} graph. Each view including all calls within a range of edge IDs.
} 
\end{example}

\graphsurge{} materializes the view collection described by a GVDL query 
in three steps. Below, we let $p_j$ denote the predicate defining $GV_j$ in a given view collection.

\begin{figure}[t]
    \centering
    \setlength\tabcolsep{0.7pt}
    \begin{tabular}{@{}ccccclrrrrlrrrr@{}}
        \toprule
        \multicolumn{1}{l}{}      & \multicolumn{4}{c}{$EBM$}   &  & \multicolumn{4}{c}{$EDS_{def} $ (\# = 540)}                               &  & \multicolumn{4}{c}{$EDS_{opt}$ (\# = 260)}                               \\ \cmidrule(lr){2-5} \cmidrule(lr){7-10} \cmidrule(l){12-15} 
        \multicolumn{1}{l}{edges} & $GV_1$ & $GV_2$ & $GV_3$ & $GV_4$ &  & $GV_1$        & $GV_2$        & $GV_3$        & $GV_4$        &  & $GV_3$        & $GV_1$        & $GV_2$        & $GV_4$        \\ \midrule
        $e_0$-$e_9$               & 1      & 0      & 0      & 0      &  & +1            & -1            & $\varnothing$ & $\varnothing$ &  & $\varnothing$ & +1            & -1            & $\varnothing$ \\
        $e_{10}$-$e_{49}$         & 1      & 0      & 1      & 0      &  & +1            & -1            & +1            & -1            &  & +1            & $\varnothing$ & $\varnothing$ & -1            \\
        $e_{50}$-$e_{59}$         & 1      & 1      & 1      & 0      &  & +1            & $\varnothing$ & $\varnothing$ & -1            &  & +1            & $\varnothing$ & $\varnothing$ & $\varnothing$ \\
        $e_{60}$-$e_{99}$         & 1      & 1      & 1      & 1      &  & +1            & $\varnothing$ & $\varnothing$ & $\varnothing$ &  & $\varnothing$ & $\varnothing$ & +1            & $\varnothing$ \\
        $e_{100}$-$e_{199}$       & 0      & 1      & 0      & 1      &  & $\varnothing$ & +1            & -1            & +1            &  & $\varnothing$ & $\varnothing$ & +1            & $\varnothing$ \\ \bottomrule
        \end{tabular}
    \caption{An example EBM for the view collection in Listing~\ref{lst:ex-view-collection-gvdl} and 2 EDS's for 2 different collection orders (Section~\ref{sec:collection-ordering}).}%
    \label{fig:ordering-example}
\vspace{-15pt}
\end{figure}

\noindent {\bf Step 1. {\em Edge Boolean Matrix Computation}:} For each edge $e_i$ in the base graph and each view $GV_j$ in the collection, \graphsurge{} runs the predicate $p_{j}$ on $e$ and outputs an {\em edge boolean matrix} (EBM) that specifies whether $e_i$ satisfies $p_j$. 
This is an embarrassingly parallelizable computation and is performed by a TD dataflow.

\noindent {\bf Step 2. {\em Collection Ordering}:} 
The goal of this step is to put views with a higher overlap of their edges next to each other, so that the differences between neighboring views is smaller and running an analytics computation on the collection results in higher computation sharing. To achieve this, \graphsurge{} re-orders the views in EBM so that views whose predicates satisfy highly overlapping sets of edges are adjacent to
each other. As we discuss momentarily, the goal of this optimization is to
store the views in the collection in a more compact {\em edge difference stream}, i.e.,
using fewer edge differences. As we demonstrate
in our evaluations, this optimization step can lead to significant performance benefits.
The output of this step is the same EBM but possibly with a different column ordering. We defer the details of how collections are ordered to Section~\ref{sec:collection-ordering}.

\noindent {\bf Step 3. {\em Edge Difference Stream (EDS) Computation}:} Finally, \graphsurge{}
takes the reordered EBMs and materializes the views in the view collection as an edge difference stream that is consistent with the semantics of difference sets of differential computation.
Specifically, we treat the entire view collection $C$ as an evolving input stream according to the order obtained in step 2. For simplicity, let $GV_1, \ldots, GV_k$ be the order of the views after step 2, so $C_t = GV_t$.  
Recall from Section~\ref{sec:background} 
that according to differential computation semantics, the difference of a stream $A$ at timestamp
$t$ is $\delta A_t = A_t -  \cup_{s < t} \delta A_s$. So the edge difference of a view $t$,
 $\delta C_t$ is computed to ensure that $\delta C_t$$=$$GV_t$$-$$\cup_{s < t} \delta C_s$ equality holds. Specifically, the multiplicity of each edge $e_i$ in $\delta C_t$ is: (i) 0 if $GV_{t-1}$ and $GV_t$ both contain or both do not contain $e_i$; (ii) 1 if $GV_{t-1}$ does not contain $e_i$ and $GV_t$ does; or (iii) -1 if $GV_{t-1}$ contains $e_i$ and $GV_t$ does not.
The contribution of each edge $e_i$ to $\delta C_t$ can be computed independently, so this is another embarrassingly parallelizable step. 

\begin{example}%
\label{ex:ebm-eds}
Figure~\ref{fig:ordering-example} shows an example EBM for the 
view collection from Listing~\ref{lst:ex-view-collection-gvdl}. For example, $GV_1$ has
1 for all the edges $e_0$ to $e_{99}$ and 0 for others since its predicate is $ID < 100$.
Ignore the right side of the figure for now. On the left side, the figure also shows the EDS
$EDS_{def}$ that corresponds to the default order of $GV_1$,$GV_2$,$GV_3$,$GV_4$.
The first row of $EDS_{def}$ for $e_0$-$e_9$ contains: (i) +1 for $GV_1$ because $GV_1$ contains all of these
edges; (ii) -1 for $GV_2$ because $GV_2$ does not contain any edges (so that union of these
differences with $GV_1$ gives the empty set); and $\emptyset$ for the rest of the views 
because they also do not contain these edges.
\end{example}

We end this section with a note on GVDL. Recall from Section~\ref{subsec:individual-views} that we have limited the 
view queries users can express in GVDL to simple node and edge filter
predicates. This ensures that each view over the same base graph contains
a subset of a larger ``ground truth'' set of edges and that
each view has the same set of node IDs, i.e., 
a node with ID $u$ in a view $GV_i$ maps to the node with ID $u$ in a view $GV_j$. 
This allows \graphsurge{} to easily compute an EBM for a collection
and the edge differences between two views from the EBM. If 
we allowed views that created new nodes, and we could not assume
a node ID mapping between the views, the system could not easily compute 
an edge boolean matrix or difference stream, which is critical for us to 
store the views compactly and use differential computation when we run
analytics computations over view collections (discussed in the next section). 
One can extend GVDL to support more general {\em individual views} that
can create new nodes and edges, e.g., those that form super nodes and edges, 
and run analytics computations over these views. However, 
it would be challenging to store multiple such views compactly in a view collection as an edge difference stream,
if the system cannot infer a mapping between the nodes across views.

\vspace{-3pt}
\subsubsection{Analytics on View Collections}\label{subsubsec:view collection-analytics}

Given an analytics program~$P$ that a user wants to run on all views of a view collection~$C$, in absence of any collection splitting, which is an optimization we describe in Section~\ref{sec:collection-splitting}, the analytics computation executor runs~$P$ as follows. First, the system runs~P on $C_{0}$, i.e., the ``first'' view in $C$, and when this computation finishes, in an outside loop {\em advances} (in DD terminology) $C$ to $C_{1}$ by feeding $\delta C_1$ to DD. Then the system feeds $\delta C_{2}$ to DD, so on and so forth, until all views are evaluated. When computing $P$ at each time $t$, DD will automatically share computation from the ``prior'' views on which $P$ has been computed, in some cases leading to significant performance gains compared to running~$P$ on each view from scratch. The output of the DD program is a set of {\em output difference sets} for the output \texttt{(VID, Results)} stream specified in the \texttt{graph\_analytics} function. The output difference stream can then be stored or processed by the user.

\vspace{-5pt}
\subsubsection{Support for dynamic graphs}\label{subsubsec:dynamic-graphs}

While \graphsurge{} is built for applications that work with static views of a static graph, it can also support analysis of dynamic graphs by ingesting timestamped stream of updates to a graph and creating a view collection where the views represent batches of updates for different time windows given by a filter predicate on the edge timestamps. However, in our current implementation, the analytics are not performed in a  traditional streaming fashion and all of the data a users want to analyze needs to be fully ingested into \graphsurge{} before creating view collections and running their analytics.

\section{Collection Ordering}%
\label{sec:collection-ordering}

Given a set of $k$ views in a view collection $C$ defined by an application, there are $k!$ different ways \graphsurge{} can order the views before running analytics computations differentially on the collection.
This is important because the number of edge differences that are generated in the final collection is solely determined by the order of the views. 
Recall from Section~\ref{subsubsec:view collection-analytics} that when running analytics computations
on a view collection $C$, \graphsurge{} iterates over neighboring views and for view $t$ feeds in the difference set $\delta C_t$ to DD (in absence of collection splitting). The smaller the size of the differences, the larger the structural overlap
between view $C_t$ and the union of the views prior to $C_t$,
which we expect to lead to larger computation sharing. As we present in our evaluations, by picking orderings that minimize the set of differences, \graphsurge{} can improve performance significantly in certain applications.
 We can formulate this problem as a concrete optimization problem as follows:

\begin{definition} Collection Ordering Problem (COP): {\em Given a view collection $C$,
find the collection ordering that minimizes the sum of the sizes of difference sets $\delta C_t$.}
\end{definition}

We next show that COP is NP-hard. 
Our proof is through a reduction from the
{\em consecutive block minimization problem (CBMP)} for boolean matrices. In a boolean matrix $B$, such
as the edge boolean matrix (EBM) in Figure~\ref{fig:ordering-example}, a \emph{consecutive block} is a 
maximal consecutive run of 1-cells in a single row of $B$,
which is bounded on the left by either the beginning of the row or a 0-cell, and bounded on the right by
either the end of the row or a 0-cell. Given a column ordering $\sigma$ for $B$, let $cb(B, \sigma)$ denote the total number of consecutive blocks in $B$ over all rows.  CBMP is the problem of finding the ordering $\sigma^*$ of the columns of $B$ that minimizes $\min_{\sigma}cb(B, \sigma)$. 
CBMP is known to be NP-hard~\cite{kou:cbm-np-complete}. 
\begin{theorem} COP is NP-hard.
\end{theorem}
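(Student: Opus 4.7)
The plan is a polynomial-time reduction from CBMP, which is cited as NP-hard, to COP. The starting observation is a per-row identity between the two cost functions. Fix an ordering $\sigma$ of the columns of an EBM $B$ with $k$ columns. Because the implicit pre-state of the collection is empty (so $\delta C_1 = GV_1$), the nonzero entries of row $r$ in the edge difference stream are exactly the positions $t \ge 1$ at which $B_{r,\sigma(t)} \ne B_{r,\sigma(t-1)}$, using the convention $B_{r,\sigma(0)} = 0$. A short case analysis on whether the final column carries a $1$ in row $r$ then yields
\[
\text{EDS}_r(\sigma) \;=\; 2\,b_r(\sigma) \;-\; \mathbf{1}[B_{r,\sigma(k)} = 1],
\]
where $b_r(\sigma)$ is the number of consecutive blocks of row $r$ under $\sigma$. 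Summing over rows gives
\[
\text{COP}(\sigma) \;=\; 2\,\text{CBMP}(\sigma) \;-\; s_{\sigma(k)},
\]
where $s_j$ is the number of $1$-cells in column $j$. The two objectives agree up to the ordering-dependent boundary term $s_{\sigma(k)}$.

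The crucial step is to neutralize that boundary term. Given a CBMP instance $B$, let $M = \max_j s_j$, and for each column $j$ append $M - s_j$ auxiliary rows, each containing a single $1$ in column $j$ and zeros elsewhere. Every such row has exactly one consecutive block under every ordering, so its contribution to $\text{CBMP}(\sigma)$ is the constant $1$; across all auxiliary rows this adds a $\sigma$-independent constant to CBMP, so optimal orderings for CBMP on the augmented matrix $B'$ coincide with those for $B$. Moreover, in $B'$ every column has exactly $M$ ones, so $s_{\sigma(k)} = M$ for every $\sigma$, and the identity collapses to
\[
\text{COP}(\sigma) \;=\; 2\,\text{CBMP}(\sigma) \;-\; M
\]
on $B'$ uniformly in $\sigma$. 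Minimizers of the two objectives therefore coincide on $B'$ and pull back to an optimal ordering of $B$. To formalize, I would map a CBMP decision instance $(B, t)$ to the COP instance $(B',\, 2(t+a) - M)$, where $a$ is the number of auxiliary rows added; this transformation is polynomial and preserves yes/no answers, giving the reduction from~\cite{kou:cbm-np-complete}.

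The only subtlety I anticipate is getting the per-row identity correct at the boundaries---in particular the all-zero row (where $b_r = 0$) and a row whose unique block spans the last column. Both are routine case checks. The padding construction itself is standard, and is the key step that turns the up-to-boundary correspondence between COP and CBMP into an exact reduction.
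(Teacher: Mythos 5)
Your proof is correct, and it follows the paper's overall strategy --- a polynomial reduction from CBMP built on the same per-row identity (a row with $b_r(\sigma)$ consecutive blocks contributes $2b_r(\sigma)$ differences, or $2b_r(\sigma)-1$ when its last cell is a $1$) --- but it neutralizes the ordering-dependent boundary term with a different gadget. The paper appends the complement $B^C$ beneath $B$, so that each row pair $(r, r^C)$ contributes a quantity depending only on $cb(r,\sigma)$, giving $ds(B_{EBM},\sigma) = 4\,cb(B,\sigma) - m_{01} + m_0 + m_1$; you instead pad with singleton rows so every column has the same number $M$ of ones, which forces $s_{\sigma(k)} = M$ for all $\sigma$ and yields $\mathrm{COP}_{B'}(\sigma) = 2\,cb(B,\sigma) + 2a - M$. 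Both gadgets turn the ``blocks vs.\ differences'' correspondence into an exact affine relation with $\sigma$-independent constants, and both blow up the instance only polynomially ($2k_1$ rows for the paper, at most $k_1 k_2$ extra rows for you); your version has the small additional merits of making the decision-instance mapping explicit and of keeping the relation in the form $2\,\mathrm{CBMP} + \text{const}$, whereas the paper's complement trick keeps the row set independent of the column sums. One point worth a sentence in a polished write-up (the paper glosses it too): the constructed matrix must be realizable as the EBM of an actual view collection, which is immediate here since one can take one edge per row and define each view's predicate to select exactly the edges with a $1$ in its column.
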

\begin{proof}
For an input graph $G$, let $C$ be a view collection over $G$,
and let the EBM of $C$ be $C_{EBM}$. 
For a fixed column ordering $\sigma$ of $C_{EBM}$, let
the size of the difference sets in $\sigma$ be $ds(C_{EBM}, \sigma)$.
Therefore,
COP is equivalent to the following problem on boolean matrices: given a
boolean matrix $C_{EBM}$, find the $\sigma^*$ that minimizes $\min_{\sigma}ds(C_{EBM}, \sigma)$. 
Recall from step 3 of view collection materialization (Section~\ref{subsec:view-collections}) that 
the difference set for an edge $e$, which is represented by a row $r$ in $C_{EBM}$, is calculated as follows: for the first appearance of $e$ from left to right, i.e., for the first 1 in $r$, we count one difference. Then for each subsequent alternating appearance of a 0,
and then 1, and then 0, etc., we count one additional difference. Note that this is different than the definition of a consecutive block. For example, a row (1110) has 1 consecutive block but 2 diffs: one diff for the first view, and one diff for the last view. 

Our reduction is from CBMP\@. Given a $k_1 \times k_2$ matrix $B$ 
to CBMP, consider (in poly-time) constructing a $2k_1 \times k_2$ matrix $B_{EBM}$ that contains B and the complement of $B$, 
$B^C$, under $B$. That is $B^C$ contains 1s where B contains 0s and vice versa.
Note that for each row $r$ of $B$, both $r$ and $r^C$ appear in $B_{EMP}$ exactly once.
Let $B_0$, $B_1$, and $B_{01}$, respectively, be the set of rows in 
$B$ that contain only 0s, only 1s, and both a 0 and a 1, and let $|B_0|=m_0$, 
$|B_1|=m_1$, and $|B_{01}|=m_{01}$.  
Given an arbitrary column ordering $\sigma$, we analyze the number of differences each row in $B$ induces in $B_{EMP}$.
\begin{squishedlist}
\item Row $r$ in $B_0$ yields 0 but $r^C$ yields 1 difference.
\item Row $r$ in $B_1$ yields 1 difference but $r^C$ yields 0 difference.
\item Row $r$ in $B_{01}$ requires analyzing two cases. Let $cb(r, \sigma)$ denote the
number of consecutive blocks only in row $r$. (i) If $r$'s last cell is a 0, then $r$ yields $2cb(r, \sigma)$ and $r^C$ yields $2cb(r, \sigma)-1$ differences; and (ii) otherwise $r$ yields  $2cb(r, \sigma)-1$ and $r^C$ yields
 $2cb(r, \sigma)$ differences. Therefore, in either case, $r$ yields $4cb(r, \sigma)-1$ differences. 
\end{squishedlist} 

\noindent Therefore, $ds(B_{EBM}, \sigma)$ is: $(\sum_{j\in B_{01}} 4cb(r, \sigma)-1) + m_0 + m_1$, which is equal
to $4cb(B, \sigma) - m_{01} + m_0 + m_1$. This establishes a one-to-one connection between
the sizes of the difference sets in $B_{EBM}$ and the number of consecutive blocks in $B$ under any ordering $\sigma$. 
Since for any $B$, $m_0$, $m_1$, and $m_{01}$ are fixed, finding the optimal ordering $\sigma^*$ 
that minimizes $ds(B_{EBM})$ also minimizes $cb(B)$, completing the proof that solving COP is NP-hard. 
\end{proof}
We next describe a 3-approximation to COP, which uses a 1.5-approximation algorithm for CBMP from reference~\cite{haddadi:cbmp-approx}, CBMP$_{1.5}$, which we next review.
CBMP$_{1.5}$ takes as input an $m \times k$ boolean matrix $B$, creates the matrix $0B$ by padding a 0 column, and then transforms $0B$ into a $(k+1)$ clique $G^{0B}$, where each column (so each view in our case) is a node, and the weight between the nodes is the Hamming distance of the columns they represent. Reference~\cite{haddadi:cbmp-approx} shows that $G^{0B}$ satisfies the triangle inequality and the entire transformation from $B$ to $G^{0B}$ is approximation preserving. Therefore, solving TSP, with the well known Christofides algorithm~\cite{christofides} yields a 1.5-approximation to CBMP after removing the 0 column from the tour, which gives a chain between the
remaining columns to get an ordering.

\begin{corollary} Running CBMP$_{1.5}$ on the EBM of a view collection gives a 3-approximation algorithm for COP\@.
\end{corollary}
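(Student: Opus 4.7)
My plan is to sandwich the difference-set objective $ds(B,\sigma)$ between the consecutive-block objective $cb(B,\sigma)$ and twice itself on any boolean matrix, and then chain these bounds with the 1.5-approximation guarantee of CBMP$_{1.5}$.

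First I would revisit the per-row case analysis carried out in the proof of the theorem, but applied directly to the EBM $C_{EBM}$ rather than to the doubled matrix $B_{EBM}$. For each row $r$ and each column ordering $\sigma$: an all-zero row contributes $0$ diffs and $0$ blocks; an all-one row contributes $1$ diff and $1$ block; a mixed row whose last cell is $0$ contributes exactly $2\,cb(r,\sigma)$ diffs; and a mixed row whose last cell is $1$ contributes exactly $2\,cb(r,\sigma)-1$ diffs. In every case the per-row diff count lies in the interval $[cb(r,\sigma),\,2\,cb(r,\sigma)]$ (the lower bound uses that any mixed row has $cb(r,\sigma)\ge 1$). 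Summing over rows yields the key inequality
\[
cb(C_{EBM},\sigma)\;\le\;ds(C_{EBM},\sigma)\;\le\;2\,cb(C_{EBM},\sigma)
\]
for every column ordering $\sigma$ of $C_{EBM}$.

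Next, let $\sigma^{*}_{ds}$ be an optimal ordering for COP (i.e., minimizing $ds$) and let $\sigma^{*}_{cb}$ be an optimal ordering for CBMP on $C_{EBM}$. Let $\sigma_{alg}$ denote the ordering returned by CBMP$_{1.5}$ on $C_{EBM}$. By the guarantee of CBMP$_{1.5}$, $cb(C_{EBM},\sigma_{alg})\le 1.5\cdot cb(C_{EBM},\sigma^{*}_{cb})$. I would then chain the inequalities as follows: using the upper bound first, then the approximation guarantee, then the optimality of $\sigma^{*}_{cb}$ for $cb$, and finally the lower bound,
\[
ds(C_{EBM},\sigma_{alg}) \le 2\,cb(C_{EBM},\sigma_{alg}) \le 3\,cb(C_{EBM},\sigma^{*}_{cb}) \le 3\,cb(C_{EBM},\sigma^{*}_{ds}) \le 3\,ds(C_{EBM},\sigma^{*}_{ds}).
\]
This establishes the 3-approximation.

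The step I expect to require the most care is the per-row lower bound $ds(r,\sigma)\ge cb(r,\sigma)$, because a naive statement ``each block contributes at least one diff'' could be misread for a row ending in $1$ where the last block only contributes one transition; one must verify that even here $2\,cb(r,\sigma)-1\ge cb(r,\sigma)$, which holds precisely because mixed rows have $cb(r,\sigma)\ge 1$. Everything else is a short chain of inequalities that reuses the formulas derived in the NP-hardness proof, so once the sandwich bound is in hand the corollary follows immediately.
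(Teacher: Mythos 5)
Your proof is correct and follows essentially the same route as the paper: both establish the sandwich bound $cb(C_{EBM},\sigma)\le ds(C_{EBM},\sigma)\le 2\,cb(C_{EBM},\sigma)$ via the per-row case analysis and then combine it with the $1.5$-approximation guarantee of CBMP$_{1.5}$ to obtain the factor $3$. Your version merely spells out the chain of inequalities (and the $cb(r,\sigma)\ge 1$ point for mixed rows) that the paper leaves implicit.
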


\begin{proof}
To see this, consider any input $C_{EBM}$ to COP and any ordering $\sigma$ for $C_{EBM}$. Because each row $r$ contains either $2cb(r, \sigma)-1$ or  $2cb(r, \sigma)$ differences, $cb(C_{EBM}, \sigma) \le ds(C_{EBM}, \sigma) \le 2cb(r, \sigma)$. Therefore, since Christofides algorithm returns a 1.5-approximation algorithm for CBMP, it returns a 3-approximation for COP\@.
\end{proof}

\begin{algorithm}[t]
    \SetAlgoVlined{}
    \SetKwInOut{Input}{input}
    \SetKwInOut{Output}{output}
    \Input{Edge Boolean Matrix $B_{m \times k}$, W workers}
    \Output{A column ordering $\sigma^*$}
    \Begin{
        Partition $B_{m \times k}$ $\rightarrow$ $\bigcup\limits_{i=0}^{W-1} B_{i}$\;
        \Begin(At each worker $w_i, 0\leq{}i<W$:) {
            $C_{i}\leftarrow [0|B_i]$\;
            $U \leftarrow$ unit matrix\;
            $D_i = C_{i}^{T}(U-C_{i}) + (U-C_{i})^{T}C_{i}$\; 
            Shuffle $D_i$ to worker $w_0$
        }
        \Begin(At worker $w_0$:) {
            Receive $D_i$ from all workers $w_i$\;
            $D \leftarrow \sum_{i=0}^{W-1} D_i$\;
            $G^{0B} \leftarrow$  complete graph ($|V|\!=\!(\!k\!+\!1\!)$) induced from $D$\;
            $\sigma^* \leftarrow$ tsp\_christofides($G^{0B}$)\;
            Broadcast $\sigma^*$ to all workers $w_i$\;
        }
    \vspace{-5pt}
    }
    \caption{Collection Ordering Optimizer}\label{algo:ordering-tsp}
\end{algorithm}

Algorithm~\ref{algo:ordering-tsp} shows our collection ordering optimizer. 
We take the EBM $C_{EBM}$ as input. Given $B = C_{EBM}$, we construct $G^{0B}$ using a TD program that performs the padding and then in an embarrassingly parallel way find the Hamming distances between each view. Then we collect the $G^{0CB}$ in  a single worker and run Christofides algorithm in a single TD worker. The output of this algorithm gives 2 possible orders, one for each 
direction of the chain and either is a 3-approximation. However, these orders do not necessarily yield the same
number of differences, and we pick the order with the smaller differences\footnote{We note that 
these orders would return the same value for CBMP.}.

\newcommand{\hhlline}[4]{\draw (#1-#2-#3.south west) -- (#1-#2-#4.south east);}
\newcommand{\vvlline}[4]{\draw (#1-#3-#2.north east) -- (#1-#4-#2.south east);}
\begin{figure}[t]
    \centering
    \begin{tikzpicture}[auto,node distance = 1.5cm]
        \tikzstyle{every state}=[
            draw = black,
            thick,
            fill = white,
            minimum size = 4mm
        ]
        \tikzset{myptr/.style={draw=magenta}}

        \node[state,fill=magenta] (0) at (0,0) {$0$};
        \node[state] (1) at (1.5,1) {$GV_1$};
        \node[state] (2) at (1.5,-1) {$GV_2$};
        \node[state] (3) at (3.5,1) {$GV_3$};
        \node[state] (4) at (3.5,-1) {$GV_4$};

        \path (0) edge node[anchor=south,xshift=-0.15cm] {100} (1);
        \path (0) edge node[anchor=north,xshift=-0.15cm] {150} (2);
        \path (0) edge[myptr] node[anchor=south west,pos=0.25,xshift=-0.1cm] {90} (3);
        \path (4) edge[myptr] node[anchor=south west,pos=0.9,yshift=-0.1cm] {140} (0);
        \path (1) edge[myptr] node[anchor=west] {150} (2);
        \path (3) edge[myptr] node[anchor=south] {10} (1);
        \path (1) edge node[anchor=south west,pos=0.8,xshift=-0.2cm] {160} (4);
        \path (2) edge node[anchor=north west,pos=0.8,xshift=-0.2cm] {140} (3);
        \path (2) edge[myptr] node[anchor=north] {10} (4);
        \path (3) edge node[anchor=west] {150} (4);
    \end{tikzpicture}
    \caption{$G^{0B}$ of the collection from Listing~\ref{lst:ex-view-collection-gvdl}. Purple lines are the output tour/order of the Christofides algorithm.}%
    \label{fig:christofides-example}
    \vspace{-10pt}
\end{figure}
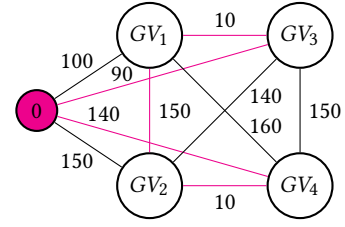

\begin{example}%
\label{ex:view-ordering-tsp}
Figure~\ref{fig:christofides-example} shows the example $G^{0B}$ corresponding to the view collection from
Listing~\ref{lst:ex-view-collection-gvdl}, whose EBM was shown in Figure~\ref{fig:ordering-example}. 
For example, the weight of the edge between $GV_1$ and $GV_3$ is 10 because there 
are only 10 edge differences between these two views, specifically $GV_1$ contains edges $e_0$ to $e_9$,
while $GV_3$ does not and the views overlap on other edges. The red lines in the figure show the TSP tour
that Christofides algorithm outputs. Taking the 0 node out of the tour gives us a chain, where 
the $GV_3$,$GV_1$,$GV_2$,$GV_4$ order is the better of the two possible orders.
The EDS $EDS_{opt}$ that corresponds to this optimized order is shown
on the right side of Figure~\ref{fig:ordering-example}, which reduces the number of differences
of the default order from 540 to 260.
\end{example}

Our collection ordering technique materializes each view in a collection. An interesting question is whether a good 
ordering can be obtained through only inspecting the definitions of the views and without inspecting 
and materializing the views. This can be possible for example when there is a containment relationship between the 
predicates defining the views, in which case the best order follows the containment order. For example, a system can infer 
that a view defined by the predicate ``year < 2010'' is contained within the view defined by ``year < 2011''.  Although 
general query or view containment~\cite{querycontainment} is a hard problem~\cite{querycontainment:hard1,querycontainment:hard2}, prior literature has identified cases when it can be determined, e.g., when the predicates are certain conjunctive queries~\cite{afrati2006rewriting,chekuri:conjunctive}. In cases when views are arbitrary, we do not know of any technique to find a good ordering without inspecting the data in
the views.
Finally, reference~\cite{babu:ordering} has studied ordering predicates of a large conjunctive predicate to put selective predicates earlier. This work assumes a streaming data setting, instead of
our static setting, and orders the predicates 
in a large conjunctive predicate based on general selectivity statistics about the predicates. In contrast to this
work, our goal is to order arbitrary, not-necessarily conjunctive, predicates with the goal of putting predicates
whose outputs have overlaps next to each other.

\vspace{-5pt}
\section{Collection Splitting}%
\label{sec:collection-splitting}
Even after we find a good ordering that minimizes the sizes of the difference sets generated, running each view differentially may not be ideal.
If the computation footprints of $A$ on $GV_i$ and $GV_{i+1}$ are very different, differentially fixing $A$ on $GV_{i+1}$ might be slower than running $A$ on $GV_{i+1}$ from scratch. 

\begin{table}[t]
    \centering
    \begin{tabular}{@{}cccc@{}}
        \toprule
        \textbf{|Difference Sets|} &    \textbf{Algorithm}    & \textbf{\diffonly{}} & \textbf{\scratch{}} \\ \midrule
        \multirow{2}{*}{\textbf{1K}}  & \textbf{BFS}  & \bf{1.4s}       & 13.5s             \\
                                         & \textbf{PR} &      \textbf{66.5s}   &    136.2s          \\ \midrule
        \multirow{2}{*}{\textbf{3.5M}} & \textbf{BFS}  & \bf{13.0s}       & 25.7s             \\
                                         & \textbf{PR} & 281.9s        & \bf{193.2s}             \\ \bottomrule
        \end{tabular}
        \vspace{3pt}
        \caption{Runtimes of BFS and PR for two view collections on the Orkut graph, containing 1K- and 3.5M-size difference sets, in two ways: (i) \diffonly{}; and (ii) \scratch{}.}\label{table:splitting-motivation}
        \vspace{-25pt}
\end{table}

Several factors determine how big the difference is between $A$'s footprint on two consecutive views $GV_i$ and $GV_{i+1}$, which determines how expensive it is to compute $GV_{i+1}$ differentially. Two of these factors can be observed by \graphsurge{}: (1) how stable is $A$? (2) how large are the difference sets?  
We use the term unstable to refer to computations that may generate a lot of differences on small differences to input datasets. We next demonstrate these factors through a controlled experiment. We will also demonstrate this on a more realistic application in Section~\ref{sec:experiments}. We take 10M edges from the Orkut social network graph and construct an initial view $GV_1$ and then construct two artificial view collections each containing 20 views: (i) $C_{1K}$, in which we randomly add 500 edges and remove 500 edges to each $GV_{i-1}$; (ii) $C_{3.5M}$, in which we add 2M edges and remove 1.5M edges randomly to each $GV_{i-1}$. The sizes of the difference sets are picked to obtain a collection with highly similar and highly different views, respectively. We then run Breadth First Search (BFS) and PageRank (PR) on both collections in two ways: (i)~\diffonly{}: runs the collection only differentially; and (ii)~\scratch{}: runs each view in the collection from scratch.

Table~\ref{table:splitting-motivation} shows the runtimes. First, notice that on $C_{3.5M}$, while it is better to run BFS differentially, it is better to run PageRank from scratch. This is because PageRank is a less stable algorithm than BFS. For example, assume $GV_{i+1} = GV_i \cup \{u$$\rightarrow$$v\}$, so the views differ by a single edge addition, and consider differentially fixing the first iteration of BFS\@. 
This addition results in 1 difference in the \texttt{JoinMsg} operator. 
In vertex-centric terms, it will result in $u$ sending 1 more extra message to $v$ containing $u$'s current distance. 
 In contrast, in PageRank, $u$ sends a message of $1/deg(u)$ to its neighbors so all messages that $u$ sends might change. 
Second, observe that when the views are sufficiently similar, specifically 
in $C_{1K}$, running PageRank differentially also starts to be the better option. That is, the size of the differences also determines whether running views differentially vs from scratch is the better option.

We have implemented an adaptive optimizer that decides whether to run each view $GV_i$ in a view collection differentially or from scratch. 
Our optimizer observes two simple runtime metrics to make its splitting decisions: (1)~Each time the system decides to split the collection at $GV_{i}$ and run $GV_{i}$ from scratch, we measure how long it took to compute $A$ on $GV_{i}$ from scratch and what was the size of $G_{i}$ and (2)~Each time the system decides to run a $GV_{i}$ differentially we keep track of how long it took to run $GV_{i}$ differentially and what was the size of $\delta C_i$. Then, for each $GV_i$, we use two simple linear models to estimate how long it would take to rerun $GV_i$ from scratch and differentially given, respectively, the sizes of $GV_i$ and $\delta C_i$, and pick the faster estimated option. Specifically:
\begin{squishedenumerate}
\item[1.] Run $GV_1$ from scratch and $GV_2$ differentially and keep track of ($|GV_1|$, $st_1$), for {\bf s}cratch {\bf t}ime, and ($\delta C_2$, $dt_2$), for {\bf d}ifferential {\bf t}ime. 

\item[2.] Then for each other view $GV_i$ for $i=3,\ldots,k$, estimate the run time of running $GV_i$ from scratch or differentially using the collected $st_j$ times and the size of $|GV_i|$ and the $dt_j$ points and the size of $\delta C_i$. 
\end{squishedenumerate}

\noindent 
In our actual implementation, we make splitting decisions for $\ell$ views at a time (10 by default) as feeding multiple views to DD makes DD's data indexing code run faster.
\noindent We will demonstrate that our optimizer is both able to adapt to running computations differentially or from scratch, when either option is superior, and can even outperform both options in some cases by selectively splitting collections in a subset of the views.

We next discuss an important question: How much faster can an algorithm $A$ running differentially on a view collection $C$  be compared to running $A$ on each view from scratch (and vice versa)? 
A high-level answer should instruct the benefits we can expect from adaptive splitting.
Consider a $k$-view collection $C$, where each view is identical. This is conceptually the best case for running $A$ on $C$ differentially, where after the first view, the rest of the  views are computed instantaneously. Therefore, differentially computing $A$ can be {\em $k$ factor better} than running $A$ from scratch. 
Interestingly, the situation is not similar in the reverse direction. The worst case for running $A$ on $C$ differentially is if each view was completely disjoint, i.e., $\delta C_{i}$$=$\{$-GV_{i-1}$$\cup$$+GV_i$\}. We effectively completely remove $GV_{i-1}$ and add $GV_{i}$. Therefore, when running $A$ on $GV_i$ differentially, $DD$, to the first approximation, will ``undo'' computation for $GV_{i-1}$ and then run $A$ from scratch differentially. We effectively compute $A$ on each view twice, and should expect a bounded, around 2x, slow down to running computations differentially even in this worst case. This is an important robustness property of running computations differentially.  It is still important to perform our splitting optimization because: (i) there can still be a significant performance gain over pure differential computation (we will report up to 1.9x improvements); and (ii) some unstable computations consistently perform better when computed from scratch and our splitting optimization automatically detects those cases.

\vspace{-5pt}
\section{Evaluation}\label{sec:experiments}
We next present our experiments. Section~\ref{sec:exp-diff-scratch} starts by empirically demonstrating the possible performance gains of running computations differentially across views vs running them from scratch. Section~\ref{sec:exp-collection-splitting} and~\ref{sec:exp-ordering}, respectively, evaluate the benefits of our collection splitting and ordering optimizations. Section~\ref{sec:exp-baseline} presents baseline comparisons between $DD$ and the GraphBolt~\cite{mariappan:graphbolt}. Finally,  Section~\ref{sec:exp-scalability}, 
presents that \graphsurge{} obtains good scalability across compute nodes in a cluster. 

\vspace{-5pt}
\subsection{Experimental Setup}

\customsection{Datasets:} We evaluate \graphsurge{} on 5 real-world graphs. \emph{size} below indicates the size of each dataset on disk.

\begin{squishedlist}
    \item \textbf{Stack Overflow}~\cite{snap-sxso} (SO, |V| = 2M, |E| = 63M, size = 1.6GB) is a temporal dataset where every edge has an associated UNIX timestamp indicating its creation time.
    \item \textbf{Paper Citations} (PC, |V| = 172M, |E| = 605M, size = 14.8GB) is a paper-to-paper citation graph constructed from the Semantic Scholar Corpus~\cite{semanticscholar} (version 2019-10-01). The vertices have 2~associated properties: the year of publication and the count of co-authors.
    \item \textbf{Com-Livejournal}~\cite{snap-comlj} (CLJ, |V| = 4M, |E| = 34M, size = 1.1GB) is a social network graph containing a list of ground-truth communities representing social groups that a subset of the users are part of. Users can be part of multiple communities.
    \item \textbf{Com-Wiki-Topcats}~\cite{snap-wtc} (WTC, |V| = 1M, |E| = 28M, size = 719MB) is a web graph whose vertices can belong to one or more communities representing the category of a web page.
    \item \textbf{Twitter}~\cite{kwak:twitter} (TW, |V| = 42M, |E| = 1.5B, size = 25GB) is a social network graph.
\end{squishedlist}

\pgfplotstableread[col sep=tab]{
timeperiods	diff	scratch	adapt	comparex	empty
1d	4.89	57.12	3.40	11.7x	
1m	7.63	38.80	7.38	5.1x	
6m	8.07	21.34	7.39	2.6x	
1yr	7.83	16.56	7.44	2.1x	
2yr	6.85	14.50	6.64	2.1x	
1d	2.26	30.81	2.23	13.6x	
1m	2.86	21.29	3.01	7.4x	
6m	3.35	12.18	3.24	3.6x	
1yr	4.07	9.27	3.07	2.3x	
2yr	3.11	7.01	3.12	2.3x	
1d	44.09	431.03	43.23	9.8x	
1m	85.56	312.10	90.57	3.6x	
6m	68.93	156.01	70.33	2.3x	
1yr	61.12	115.99	60.13	1.9x	
2yr	57.07	87.07	57.19	1.5x	
1d	109.59	154.15	100.51	1.4x	
1m	94.76	112.44	104.28	1.2x	
6m	48.88	57.26	48.88	1.2x	
1yr	36.23	43.19	36.31	1.2x	
2yr	29.43	30.71	28.78	1.0x	
1d	2.85	30.60	2.78	10.7x	
1m	3.03	21.94	3.26	7.2x	
6m	3.54	11.46	3.37	3.2x	
1yr	3.17	9.18	3.69	2.9x	
2yr	3.16	6.98	3.01	2.2x	
1d	7.60	105.84	7.19	13.9x	
1m	15.93	84.49	17.15	5.3x	
6m	18.18	43.44	17.70	2.4x	
1yr	15.73	32.56	15.77	2.1x	
2yr	16.51	24.83	16.08	1.5x	
}{\csim}

\pgfplotstableread[col sep=tab]{
timeperiods	diff	scratch	adapt	comparex	empty
6m	17.57	16.44	17.96	1.1x	
1yr	15.76	15.73	16.53	1.0x	
2yr	16.24	15.60	13.59	1.0x	
3yr	10.84	9.65	11.46	1.1x	
4yr	15.41	13.78	14.34	1.1x	
6m	7.27	10.43	10.80	0.7x	
1yr	8.89	8.06	8.17	1.1x	
2yr	8.82	6.96	7.30	1.3x	
3yr	5.59	4.78	5.68	1.2x	
4yr	6.97	6.96	6.85	1.0x	
6m	142.06	57.99	61.03	2.4x	
1yr	121.18	48.01	47.94	2.5x	
2yr	98.88	43.58	45.13	2.3x	
3yr	70.69	45.77	46.34	1.5x	
4yr	92.23	51.39	50.77	1.8x	
6m	40.19	26.51	27.78	1.5x	
1yr	36.23	22.31	22.76	1.6x	
2yr	34.00	21.45	21.71	1.6x	
3yr	24.27	17.58	16.29	1.4x	
4yr	30.70	19.68	19.76	1.6x	
6m	7.49	10.82	11.25	0.7x	
1yr	8.58	8.59	6.97	1.0x	
2yr	9.22	8.05	7.82	1.1x	
3yr	5.72	5.95	5.92	1.0x	
4yr	7.11	6.59	7.42	1.1x	
6m	19.77	17.18	16.99	1.2x	
1yr	22.03	15.09	15.67	1.5x	
2yr	21.43	15.08	15.42	1.4x	
3yr	18.18	13.61	13.37	1.3x	
4yr	21.64	13.71	15.06	1.6x	
}{\cno}

\begin{figure*}[t]
    \centering
    \begin{subfigure}{\linewidth}
        \centering
        \begin{tikzpicture}
            \begin{groupplot}[
                    group style={
                        group name=group,
                        group size=6 by 1,
                        xlabels at=edge bottom,
                        ylabels at=edge left,
                        horizontal sep=0.55cm,
                    },
                    width=3.85cm,
                    height=4cm,
                    /pgf/bar width=0.08cm,
                    ybar=0,
                    ylabel={(a) $\boldsymbol{C_{sim}}$\\Runtime (s)},
                    ylabel style={align=center},
                    ylabel shift=-5pt,
                    xtick=data,
                    xticklabels from table={\csim}{timeperiods},
                    xtick pos=left,
                    ytick pos=left,
                    yticklabel style={
                        font=\scriptsize,
                        /pgf/number format/set thousands separator={}
                    },
                    xticklabel style={
                        font=\small
                    },
                    enlarge x limits=0.15,
                    ymin=0,
                    cycle list name=three-colors,
                    nodes near coords,
                    nodes near coords style={
                        rotate=90,
                        anchor=west,
                        font=\small,
                        color=black
                    },
                    point meta=explicit symbolic,
                    legend entries={diff,scratch,adapt},
                    legend cell align=left,
                    legend to name=grouplegendds,
                    legend style={
                        draw=none,
                        fill=none,
                        text opacity = 1,
                        row sep=-3.5pt,
                        font=\footnotesize
                    },
                ]
                \nextgroupplot[xlabel=WCC,enlarge y limits={upper, value=0.5}]
                \foreach \y/\meta in {1/empty,2/comparex,3/empty} {
                    \addplot +[restrict expr to domain={\coordindex}{0:4}] table [
                        x expr=\coordindex,
                        y index=\y,
                        meta=\meta,
                    ] {\csim};
                }
                \nextgroupplot[xlabel=BFS,enlarge y limits={upper, value=0.5}]
                \foreach \y/\meta in {1/empty,2/comparex,3/empty} {
                    \addplot +[restrict expr to domain={\coordindex}{5:9}] table [
                        x expr=\coordindex,
                        y index=\y,
                        meta=\meta,
                    ] {\csim};
                }
                \nextgroupplot[xlabel=SCC,enlarge y limits={upper, value=0.5}]
                \foreach \y/\meta in {1/empty,2/comparex,3/empty} {
                    \addplot +[restrict expr to domain={\coordindex}{10:14}] table [
                        x expr=\coordindex,
                        y index=\y,
                        meta=\meta,
                    ] {\csim};
                }
                \nextgroupplot[xlabel=PR,enlarge y limits={upper, value=0.5}]
                \foreach \y/\meta in {1/empty,2/comparex,3/empty} {
                    \addplot +[restrict expr to domain={\coordindex}{15:19}] table [
                        x expr=\coordindex,
                        y index=\y,
                        meta=\meta,
                    ] {\csim};
                }
                \nextgroupplot[xlabel=SSSP,enlarge y limits={upper, value=0.5}]
                \foreach \y/\meta in {1/empty,2/comparex,3/empty} {
                    \addplot +[restrict expr to domain={\coordindex}{20:24}] table [
                        x expr=\coordindex,
                        y index=\y,
                        meta=\meta,
                    ] {\csim};
                }
                \nextgroupplot[xlabel=MPSP,enlarge y limits={upper, value=0.5}]
                \foreach \y/\meta in {1/empty,2/comparex,3/empty} {
                    \addplot +[restrict expr to domain={\coordindex}{25:29}] table [
                        x expr=\coordindex,
                        y index=\y,
                        meta=\meta,
                    ] {\csim};
                }
            \end{groupplot}
            \node[anchor= north east] (leg) at ($(group c1r1.north east) + (0.2cm,0.15cm)$){\pgfplotslegendfromname{grouplegendds}};
        \end{tikzpicture}
    \end{subfigure}
    \begin{subfigure}{\linewidth}
        \centering
        \begin{tikzpicture}
            \begin{groupplot}[
                    group style={
                        group name=group,
                        group size=6 by 1,
                        xlabels at=edge bottom,
                        ylabels at=edge left,
                        horizontal sep=0.5cm,
                    },
                    width=3.85cm,
                    height=4cm,
                    /pgf/bar width=0.09cm,
                    ybar=0,
                    ylabel={(b) $\boldsymbol{C_{no}}$\\Runtime (s)},
                    ylabel style={align=center},
                    ylabel shift=-5pt,
                    xtick=data,
                    xticklabels from table={\cno}{timeperiods},
                    xtick pos=left,
                    ytick pos=left,
                    yticklabel style={
                        font=\scriptsize
                    },
                    xticklabel style={
                        font=\small
                    },
                    enlarge x limits=0.15,
                    ymin=0,
                    cycle list name=three-colors,
                    nodes near coords,
                    nodes near coords style={
                        rotate=90,
                        anchor=west,
                        font=\small,
                        color=black
                    },
                    point meta=explicit symbolic,
                ]
                \nextgroupplot[xlabel=WCC,enlarge y limits={upper, value=0.5}]
                \foreach \y/\meta in {1/empty,2/empty,3/comparex} {
                    \addplot +[restrict expr to domain={\coordindex}{0:4}] table [
                        x expr=\coordindex,
                        y index=\y,
                        meta=\meta,
                    ] {\cno};
                }
                \nextgroupplot[xlabel=BFS,enlarge y limits={upper, value=0.5}]
                \foreach \y/\meta in {1/comparex,2/empty,3/empty} {
                    \addplot +[restrict expr to domain={\coordindex}{5:9}] table [
                        x expr=\coordindex,
                        y index=\y,
                        meta=\meta,
                    ] {\cno};
                }
                \nextgroupplot[xlabel=SCC,enlarge y limits={upper, value=0.5}]
                \foreach \y/\meta in {1/comparex,2/empty,3/empty} {
                    \addplot +[restrict expr to domain={\coordindex}{10:14}] table [
                        x expr=\coordindex,
                        y index=\y,
                        meta=\meta,
                    ] {\cno};
                }
                \nextgroupplot[xlabel=PR,enlarge y limits={upper, value=0.5}]
                \foreach \y/\meta in {1/comparex,2/empty,3/empty} {
                    \addplot +[restrict expr to domain={\coordindex}{15:19}] table [
                        x expr=\coordindex,
                        y index=\y,
                        meta=\meta,
                    ] {\cno};
                }
                \nextgroupplot[xlabel=SSSP,enlarge y limits={upper, value=0.5}]
                \foreach \y/\meta in {1/comparex,2/empty,3/empty} {
                    \addplot +[restrict expr to domain={\coordindex}{20:24}] table [
                        x expr=\coordindex,
                        y index=\y,
                        meta=\meta,
                    ] {\cno};
                }
                \nextgroupplot[xlabel=MPSP,enlarge y limits={upper, value=0.5}]
                \foreach \y/\meta in {1/comparex,2/empty,3/empty} {
                    \addplot +[restrict expr to domain={\coordindex}{25:29}] table [
                        x expr=\coordindex,
                        y index=\y,
                        meta=\meta,
                    ] {\cno};
                }
            \end{groupplot}
        \end{tikzpicture}
    \end{subfigure}%
    \vspace{-10pt}
    \caption{Runtime of algorithms showing benefits of running each view: (a) differentially or (b) from scratch.}%
    \label{fig:exp-diffs}
    \vspace{-10pt}
\end{figure*}
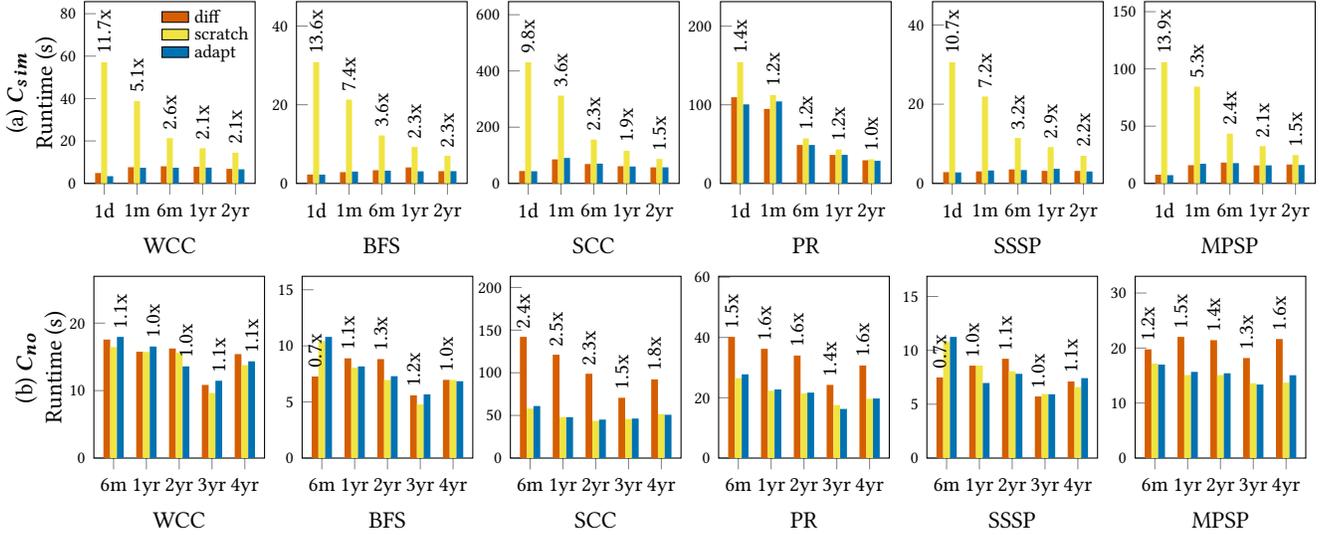

\customsection{Computations:} We use 6~different analytics computations: (i) weakly connected components (WCC); (ii) strongly connected components (SCC), which implements the doubly-iterative Coloring algorithm~\cite{orzan:thesis}; (iii) breadth-first search (BFS); (iv) single source shortest path (SSSP); (v) PageRank (PR); and (vi) multiple pair shortest path (MPSP). For BFS and SSSP, we set the source to a random vertex that has outgoing edges. For MPSP, we randomly select 5~pairs of vertices (src,dst), where src is a vertex with outgoing edges and dst is a vertex with incoming edges. All computations are implemented using \graphsurge{}'s DD-based analytics API.

\customsection{Hardware and Software:} We compiled \graphsurge{} using \texttt{rustc} v1.46.0, \texttt{timely-dataflow} v0.11.0, and \texttt{differential-dataflow} v0.11.0 and performed our experiments on a cluster of up to 12 machines each running Ubuntu 18.04.3. Each machine has 2x Intel E5-2670 @2.6GHz CPU with 32 logical cores. Every machine has 256 GB RAM, except 2, which have 512GB RAM. Except our scalability experiments, all experiments were performed on a single machine.

\vspace{-5pt}
\subsection{Comparison of Differential Computing vs Rerunning from Scratch}%
\label{sec:exp-diff-scratch}

Recall our observation from Section~\ref{sec:collection-splitting} that while differentially computing $A$ can be unboundedly faster than running from scratch, the reverse comparison is bounded.  We start by demonstrating this intuition empirically. We model a historical analysis application, where we build two sets of  view collections on the SO dataset:
\begin{squishedenumerate}
\item C$_{sim}$: are a set of {\bf s}imilar view collections that each starts with a 5-year window of the graph, from  May 2008 to May 2013, which forms the first view. Then we set a time window of size~$w$ of 1 day, 1 month, 6 moths, 1 year, and 2 years, and {\em expand} the initial window by~$w$, so each view $GV_i$ includes $GV_{i-1}$ plus an additional number of edges for a larger $w$-size window. This generates 5 collections. C$_{sim, 1d}$, where $w$ is 1 day, contains the most similar and largest number of views. C$_{sim, 2y}$ is the least similar and contains the fewest number of views. 
\item C$_{no}$: are a set of {\bf n}on-{\bf o}verlapping, so highly different views, where we start with a window of the graph from May 2008 till December 2008, then we completely {\em slide} the window by a window of size~$w$ of 6 months, 1, 2, 3, and 4 years. This generates~5 collections, all of which are completely non-overlapping. The window size $w$ allows us to create collections with increasingly more views. 
\end{squishedenumerate}
We evaluate the performance of 6 algorithms on each collection, turning our splitting and ordering optimizers off, in two ways:  \diffonly{} and \scratch{}, which were described in Section~\ref{sec:collection-splitting}.
We expect \diffonly{} to be more performant than \scratch{} in each C$_{sim}$ collection, but increasingly more as $w$ gets smaller and there are a larger number of views. We expect \scratch{} to be more performant in each C$_{no}$ collection, but we do not expect to see increasingly more gains as the number of views increases. Figures~\ref{fig:exp-diffs} show our results for the C$_{sim}$ and C$_{no}$ collections, respectively.
Observe that in C$_{sim}$ collections, indeed as $w$ gets smaller, we see an increasing factor on benefits  for \diffonly{} varying from 1.5x to 13.9x. The only exception is PageRank, which we observed is not as stable as the rest of our algorithms. In contrast, 
in the C$_{no}$ collection, we see up to 2.5x 
performance improvements for \scratch{}, 
but we do not observe improved factors with increasing number of views. %

\subsection{Benefits of Collection Splitting}%
\label{sec:exp-collection-splitting}

We next evaluate \graphsurge{}'s adaptive splitting optimizer continuing our previous set up. We refer to this configuration as \texttt{adaptive}. We still keep our ordering optimizer off to only study the behavior of our adaptive optimizer, which we refer to as \adaptive{}.
We reran the previous experiment with \adaptive{}. The \adaptive{} bar in Figure~\ref{fig:exp-diffs} shows our results.
Except for two experiments, running BFS and SSSP on $C_{no}$ with 6 month slides, 
\adaptive{} is able to perform as good or almost as good as the better of \diffonly{} or \scratch{}. Note that in these experiments, it is always better to either run the computations with one of \diffonly{} or \scratch{}, so we do not expect \adaptive{} to outperform both of these strategies. Importantly, in almost all cases, we adapt to the better strategy. 

Next we created a view collection in which \adaptive{} can outperform both \diffonly{} and \scratch{}. Specifically, we created a view collection C$_{\text{aut}}$ on the PC citation dataset, which contains the Cartesian product of two sets of windows on two properties. First is a 5 year non-overlapping window from $[1996, 2000]$ to $[2016, 2020]$. The other is a window for the number of \textbf{aut}hors on the papers, that expands from $[0, 5]$ to $[0, 25]$ in windows of size 5. For example, the view $[1996, 2000]x[0, 5]$ is the view that contains all papers written between 1996 and 2000 containing at most 5 authors and their citations. This collection contains views that generates a sequence of addition-only differences as the number of authors window expands, and then a non-overlapping view, when the year window slides, creating a potential splitting point.
\noindent Table~\ref{table:adaptive-citation} shows the runtimes of 6 algorithms on C$_{\text{aut}}$. Observe that \adaptive{} matches or outperforms, by up to 1.9x, the better of \diffonly{} and \scratch{}. \adaptive{} is able to pick the splitting points where the year window slides and consistently outperforms \diffonly{} and \scratch{} when running all algorithms.

\begin{table}[t]
    \centering
    \begin{tabular}{@{}lrlrlrl@{}}
        \toprule
                & \multicolumn{1}{l}{WCC} &               & \multicolumn{1}{c}{BFS}  &               & \multicolumn{1}{c}{SCC}  &               \\ \midrule
        diff    & 117.26                  & (1.9$\times$) & 19.29                    & (1.4$\times$) & 314.78                   & (1.8$\times$) \\
        scratch & 120.53                  & (1.9$\times$) & 44.20                    & (3.3$\times$) & 351.83                   & (2.0$\times$) \\
        adapt   & 61.88                   &               & 13.54                    &               & 179.27                   &               \\ \midrule
                & \multicolumn{1}{c}{PR}  &               & \multicolumn{1}{c}{SSSP} &               & \multicolumn{1}{c}{MPSP} &               \\ \midrule
        diff    & 79.07                   & (1.6$\times$) & 18.6607                  & (1.3$\times$) & 20.4278                  & (1.2$\times$) \\
        scratch & 114.74                  & (2.3$\times$) & 42.3927                  & (3.0$\times$) & 43.7398                  & (2.6$\times$) \\
        adapt   & 50.13                   &               & 14.347                   &               & 16.6355                  &               \\ \bottomrule
        \end{tabular}
    \caption{Runtime (seconds) of algorithms for the C$_{\text{aut}}$ collection showing that the adaptive optimizer can outperform both running differentially and from scratch.}%
    \label{table:adaptive-citation}
    \vspace{-25pt}
\end{table}

\subsection{Benefits of Collection Ordering}\label{sec:exp-ordering}

The goal of our next experiment is to study the performance gains of our collection ordering optimization. We develop a perturbation analysis application on our graph with ground truth communities, namely CLJ\@. We construct view collections by taking the largest N~communities and remove each k~combination of these N~communities to perturb the graphs in a variety of ways. Specifically we construct two collections for two N, k combinations:  C$_{10, 5}$ sets N=10 and k=5 and contains 252 views, and C$_{7, 4}$ sets N=7 and k=4 and contains 35 views. Note that this is an application where finding a good manual order is difficult, as each view removes possibly millions of edges, and there are hundreds of views in the collection. Therefore as a baseline, we will use random collection orderings. 

We first turned our adaptive splitting optimizer off to isolate the benefits due to collection ordering only and compared the performance of the order that \graphsurge{} picks, which we call \texttt{Ord},
with one random ordering, which we call \texttt{R}. Our experiments had two more random orderings, which behave almost the same as \texttt{R}, but we omit those numbers due to space constraints.
We then executed 6 algorithms on the view collections.
The \texttt{no adapt} bars in Figure~\ref{fig:exp-ordering-adaptive} show our results. Table~\ref{table:exp-ordering} presents the amount of total edge differences in our edge difference sets. Observe that: (i)~our optimizer's order generates between 3.4x to 16.8x fewer differences than the random order; and (ii)~our ordering optimization improves performance consistently and between 1.3x to 9.8x across our experiments. 
For reference, Table~\ref{table:exp-ordering} also reports the times it takes \graphsurge{} to compute the collections, with and without ordering, in row CCT (collection creation time). The difference between the random order's CCT and \texttt{Ord}'s CCT is the overhead of ordering, which ranged between 1.3x and 1.8x.

\begin{table}[t]
    \centering
    \begin{tabular}{@{}ccr@{\hskip 4pt}l@{\hskip 4pt}r@{\hskip 5pt}l@{\hskip 4pt}l@{\hskip 4pt}r@{\hskip 5pt}l@{}}
        \toprule
        \multicolumn{1}{l}{} & \multicolumn{1}{l}{} & \multicolumn{1}{l}{} &  & \multicolumn{2}{c}{Ord}  &  & \multicolumn{2}{c}{R}         \\ \cmidrule(r){1-3} \cmidrule(lr){5-6} \cmidrule(l){8-9} 
        CLJ                  & 10C5                 & \# Diffs             &  & \textbf{158M} &          &  & \textbf{1.5B} & (9.6$\times$)  \\
                            &                      & CCT                  &  & 355.0         & (+151.5) &  & 203.5         & (1.7$\times$)  \\ \cmidrule(l){2-9} 
                            & 7C4                  & \# Diffs             &  & \textbf{54M}  &          &  & \textbf{191M} & (3.6$\times$)  \\
                            &                      & CCT                  &  & 38.8          & (+8.7)   &  & 30.2          & (1.3$\times$)  \\ \midrule
        WTC                  & 10C5                 & \# Diffs             &  & \textbf{73M}  &          &  & \textbf{1.2B} & (16.8$\times$) \\
                            &                      & CCT                  &  & 299.0         & (+128.5) &  & 170.5         & (1.8$\times$)  \\ \cmidrule(l){2-9} 
                            & 7C4                  & \# Diffs             &  & \textbf{44M}  &          &  & \textbf{149M} & (3.4$\times$)  \\
                            &                      & CCT                  &  & 35.1          & (+8.5)   &  & 26.6          & (1.3$\times$)  \\ \bottomrule
        \end{tabular}
    \caption{The number of diffs and collection creation time (CCT) in seconds for C$_{10,5}$ and C$_{7,4}$  on CLJ and WTC for a random order \texttt{R1} and our optimizer's order.}%
    \label{table:exp-ordering}
    \vspace{-30pt}
\end{table}

\input{table-ordering}

We next turned the adaptive splitting optimization on to measure the performance benefits in the full system. 
This forms a full end-to-end experiment as both of our optimizations are turned on.
We expect the benefits of ordering to decrease when adaptive splitting actually splits and improves the performance 
of the random ordering. If adapting defaults to running only differentially, we expect the results to be similar to our previous results.
The \texttt{adapt} bars in Figure~\ref{fig:exp-ordering-adaptive} show our results. 
There are 3 experiments in which adapting improves the random order's performance by splitting: when running SCC on CLJ with both  C$_{10, 5}$ and 
C$_{7, 4}$ collections and running SSSP on CLJ with C$_{7, 4}$ experiment. In these cases, the benefits of ordering
decreases compared to when adapting optimization was off. For example, when running SCC
on  CLJ with  C$_{10, 5}$ collection, the benefits of ordering decreases from 4.1x to 2.0x. In other experiments,
our adaptive optimization defaults to running all computations differentially (or performs
 slightly worse than running only differentially). In these cases, the ordering optimization improves the performance
 similar to when adaptive optimization was off (between 1.1x to 10.1x).

\subsection{Baseline Temporal Systems}\label{sec:exp-baseline}

In this section, we provide baseline comparisons against GraphBolt~\cite{mariappan:graphbolt} (GB). 
GB is a shared-memory streaming system that is developed on top of Ligra~\cite{shun:ligra} and designed to maintain computation results over a stream of updates.  As such, we can develop a \graphsurge{}-like system on top of GB by feeding our view collections as an evolving graph to GB instead to DD.
The primary difference between DD and GB, and the reason we chose DD, is that GB requires users to write explicit maintenance code in functions such as \texttt{retract} or \texttt{propagatedelta}, which is challenging for some algorithms, such as the doubly-iterative SCC algorithm. 

We evaluate the performance of \graphsurge{} and GB for two computations, SSSP and PR, on the TW datset. We simulate a temporal analytics application in \graphsurge{} by constructing a view collection with 1001 views, where the first view contains 50\% of the total edges in the original graph selected randomly, and each of the remaining views contains 500 additions and 500 deletions based on the previous view. Figure~\ref{fig:baseline-new} shows the comparison results. We note that \graphsurge{} is up to 6.4$ \times$ faster than GB for SSSP, but is up to 13.5$\times$ slower than GB for PR\@. These numbers are similar to the numbers in reference~\cite{mariappan:graphbolt} (Figures 8 and 9) for GB and DD\@.

There are two primary reasons for the lower performance of \graphsurge{} for PR\@.
First, DD's execution engine uses a dataflow architecture, which is based on message passing, which has a higher runtime overhead as compared to the shared memory architecture of GB, which propagates updates by directly writing to memory locations using atomic operations. Despite this advantage, as reference~\cite{mariappan:graphbolt}, we found DD to be more performant on SSSP.
Second, \graphsurge{} uses DD as its analytics engine, which is built to support general incremental computation. However, this generality can naturally come at a performance cost, because DD is unable to take advantage of computation-specific optimizations. For example, GB implements a PR-specific incrementalization code, which is more efficient than differential computation. However, specialized incremental versions of many algorithms is very similar to differential computation. For example, GB's incremental SSSP algorithm~\cite{mariappan:graphbolt} is effectively differential computation and for such algorithms, DD generates equally efficient incremental versions automatically.

\pgfplotstableread[col sep=tab]{
graph	GS	gsx	GB	gbx
	174.15		1108.37	6.4x
	34768.52	2.3x	14914.52	
}{\gsgb}

\pgfplotstableread[col sep=tab]{
x	WCC	BFS	SSSP	PR
1	493.39	260.81	458.14	1113.92
2	320.55	210.40	323.27	653.81
4	173.06	95.06	161.37	324.03
8	148.80	55.75	107.39	203.36
12	139.05	43.05	93.72	178.21
}{\distributed}

\begin{figure}[t]
    \centering
    \begin{minipage}[b][][b]{0.47\columnwidth}
        \begin{tikzpicture}
            \begin{groupplot}[
                    group style={
                        group name=group,
                        group size=2 by 1,
                        xlabels at=edge bottom,
                        ylabels at=edge left,
                    },
                    width=2.5cm,
                    height=4.5cm,
                    /pgf/bar width=0.3cm,
                    ybar=0,
                    ylabel={Runtime (seconds)},
                    ylabel shift=-5pt,
                    xlabel shift=-5pt,
                    xtick=data,
                    xticklabels from table={\gsgb}{graph},
                    xtick pos=left,
                    ytick pos=left,
                    yticklabel style={
                        /pgf/number format/set thousands separator={}
                    },
                    enlarge x limits=0.3,
                    ymin=0,
                    cycle list name=gs-gb,
                    nodes near coords,
                    nodes near coords style={
                        rotate=90,
                        anchor=west,
                        color=black
                    },
                    point meta=explicit symbolic,
                    legend columns=2,
                    legend entries={GS,GB},
                    legend cell align=left,
                    legend to name=grouplegendgsb,
                    legend style={
                        draw=none,
                        fill=none,
                        text opacity = 1,
                        row sep=-4.5pt,
                    },
                ]
                \nextgroupplot[xlabel=SSSP,enlarge y limits={upper, value=0.5}]
                \foreach \y/\meta in {1/gsx,3/gbx} {
                    \addplot +[restrict expr to domain={\coordindex}{0:0}] table [
                        x expr=\coordindex,
                        y index=\y,
                        meta=\meta,
                    ] {\gsgb};
                }
                \nextgroupplot[xlabel=PR,enlarge y limits={upper, value=0.5}]
                \foreach \y/\meta in {1/gsx,3/gbx} {
                    \addplot +[restrict expr to domain={\coordindex}{1:1}] table [
                        x expr=\coordindex,
                        y index=\y,
                        meta=\meta,
                    ] {\gsgb};
                }
            \end{groupplot}
            \node[anchor= south west] (leg) at ($(group c1r1.north west) + (-0.2cm,-0.1cm)$){\pgfplotslegendfromname{grouplegendgsb}};
        \end{tikzpicture}
        \vspace{-15pt}
        \caption{\graphsurge{} vs GraphBolt.}%
        \label{fig:baseline-new}
    \end{minipage}
    \hfill
    \begin{minipage}[b][][b]{0.47\columnwidth}
        \begin{tikzpicture}
            \begin{axis}[
                height=4.5cm,
                width=4.5cm,
                ymin=0,
                enlarge y limits={upper, value=0.1},
                xtick=data,
                xticklabels from table={\distributed}{x},
                xlabel={\# compute machines},
                ylabel={},
                cycle list name=exotic,
                legend entries={SSSP,PR},
                legend style={
                    draw=none,
                    fill=none,
                    text opacity = 1,
                    row sep=-3pt,
                    font=\small
                },
            ]
            \foreach \y in {SSSP,PR} {
                \addplot table [
                    x=x,
                    y=\y,
                ] {\distributed};
            }
            \end{axis}
        \end{tikzpicture}
        \vspace{-5pt}
        \caption{Scaling runtime in a distributed setting.}%
        \label{fig:scaling-new}
    \end{minipage}
    \vspace{-10pt}
  \end{figure}

\vspace{-5pt}
\subsection{Distributed Execution and Scalability}%
\label{sec:exp-scalability}

We next demonstrate the ability of \graphsurge{} to scale in a distributed setting. We modified the Twitter dataset by assigning artificial city, state, and country attributes to the vertices and an affinity weight to edges that indicates the level of interaction between users. We modeled a social network analysis application that studies the connected users who live within the same city, state, and country with three different affinity levels, low, medium, and high, constructing a view collection with 9 views. We measured the runtime for 2 algorithms: SSSP and PR, on this view collection using up to 12 compute machines, each with 32 worker threads. Figure~\ref{fig:scaling-new} shows the scalability results on this large view collection. Additional machines improve the runtime for both of the algorithms almost linearly. This experiment demonstrates that \graphsurge{} is able to take full advantage of TD and DD for seamlessly scaling to multiple machines in a distributed environment.

\vspace{-5pt}
\section{Related Work}\label{sec:related-work}

\customsection{Incremental view maintenance and computation sharing ac\-ross queries:}
Differential computation is a technique to maintain the outputs
of arbitrary dataflow computations as their inputs change. 
\graphsurge{} leverages differential computation for computation sharing. There is an extensive
literature on incremental view maintenance in database literature.
These techniques focus on maintaining the outputs of relational or
datalog queries. We refer the reader to the following references on these topics~\cite{widom:incremental, gupta:ivm}. 
Work on incremental graph computations will be covered in more detail later in this section.

Differential computation is unique as a maintenance technique because it 
stores the computational footprint
of a computation $A$ on an input $E$ and detects  
and shares this stored computation when $A$ is executed on an updated version
$E'$ of the input. Although different, this is similar to work that shares 
computations across multiple queries
that run over a single input by detecting common sub-expressions.
Examples include work on continuous querying system, such as NiagaraCQ~\cite{ChenJDTW00}, 
or systems that support running multiple queries, e.g., when incrementally maintaining
multiple views~\cite{KaranasosKM13} or running multiple queries in a batch~\cite{roy:mqo}.

\customsection{Fan et al.~\cite{fan:incremental}} presents theoretical results that show that the cost of performing six specific incremental graph computations, e.g., regular path queries and strongly connected components, cannot be bounded by only the size of the changes $\Delta G$ in input and output. Then, they develop and evaluate algorithms that have guarantees in a more relaxed notion of boundedness (based on the subset of the graph inspected by the batch algorithm being incrementalized). 
Similar to the incremental PR algorithm of GB, these are incremental algorithms specialized to specific algorithms  and can be more efficient than differential computation. 

This is both a limitation and an advantage of our design because we take the burden of designing incremental algorithms
but share computation through a black-box technique.
For expert users, we could allow users to program specialized incremental algorithms by exposing a TD-based programming interface. Using this interface we can allow programmers to implement specialized incremental functions %
and execute them directly on TD, which can be more efficient than using DD.

\textbf{Tegra~\cite{iyer:tegra}} is a system developed on top of Apache Spark~\cite{zaharia:spark}, that is designed to perform ad-hoc window-based analytics on a dynamic graph. 
Specifically, Tegra allows users to tag arbitrary snapshots of their graphs with timestamps. 
The system has a technique for sharing arbitrary computation across snapshots through a differential computation-like computation maintenance logic. However, the system is  optimized for retrieving arbitrary snapshots quickly instead of sharing computation across snapshots efficiently.
Similar to GB, we can also develop a \graphsurge{}-like system on top of Tegra, however
The architectures of DD and Tegra are more similar than DD and GB, as both are dataflow systems but the authors did not provide us with the code to compare. 
a performance comparison between Tegra and DD from reference~\cite{iyer:tegra} reports Tegra's performance to be significantly slower than DD (Figure~14) for incrementally maintained computations.

\customsection{Kaskade~\cite{trindade:kaskade}} is a graph query optimization framework from Microsoft that uses {\em materialized graph views} to speed up query evaluation. Specifically, 
Kaskade takes as input a query workload $\mathcal{Q}$ and an input graph $G$. Then, the framework enumerates possible {\em views} for $\mathcal{Q}$, which are other graphs $G'$ that contain a subset of the nodes in $G$ and other edges that can represent multi-hop connections in $G$. 
Kaskade selects a set of these views, materializes them in Neo4j, and then translates queries in $\mathcal{Q}$ to use appropriate views. The system is not designed for the applications that \graphsurge{} is designed to support, where users define a set of views, that are snapshots of a graph $G$ and run analytics computations on multiple views.

\customsection{GraphGen~\cite{xirogiannopoulos:graphgen}} is a system to extract graphs out of relational tables stored in an RDBMS. Each graph is a relational view that describes a nodes table and another view that describes an edge table. Users can extract many views that are sometimes stored in memory and sometimes in the RDBMS. The focus of this work is addressing how to store  very large extracted graphs in compressed format.
The system is not designed to define multiple graphs and share computations across them.  

\customsection{Temporal and Streaming Graph Analytics Systems:}

\customsection{SAMS~\cite{then:sams}} is a system to execute a single algorithm on multiple snapshots. However, the system does not have any computation sharing capabilities similar to DD. Running a WCC algorithm on $k$ views would result in rerunning WCC $k$ times from scratch. Instead, the system is optimized for sharing accesses to the same parts of the graph to increase data locality across these $k$ computations. 

\customsection{Delta-Graph~\cite{khurana:deltagraph}} is a system designed for temporal
analysis but the system is designed primarily for retrieval of views of a
dynamic graph in arbitrary timestamps and not for performing analytics. 

\customsection{Gelly Streaming~\cite{gelly-streaming}} is a library on top of Flink~\cite{carbone:flink} to program pure streaming computations with a graph API. Users have to implement their own streaming, computation maintenance, and operator synchronization logic as a stream of edges arrives at Flink. Similar to the temporal graph analytics systems, it does not provide any general computation sharing capabilities. Therefore, it is not appropriate to use as an execution layer for \graphsurge{}, as it would require developing a DD-like layer on top of it.

\vspace{-5pt}
\section{Conclusions and Future Work}\label{sec:conclusion}
We presented the design and implementation of \graphsurge{}, an open-source view-based graph analytics system, developed on top of the TD system and its DD layer. \graphsurge{} allows users to define arbitrary views over their graphs, organize these views into view collections, and perform arbitrary graph analytics using a DD-based analytics API\@. \graphsurge{} is motivated by real-world applications, such as perturbation analysis or analysis of the evolution of large-scale networks, that require capabilities to analyze multiple, sometimes hundreds of views of static input graphs efficiently. We presented two optimization problems, the collection ordering and splitting problems, for which we described efficient algorithms and studied the performances of our optimizations.
\graphsurge{}'s approach for computation sharing is based on differential computation.
As future work, we are interested in studying modifications one can make to the internals of DD to share computations more efficiently, for example using techniques from
incremental versions of specific graph algorithms~\cite{ammar:bigjoin, mariappan:graphbolt}.%

\section{Acknowledgments}
We are grateful to Nafisa Anzum, Pranjal Gupta, and Xiyang Feng for their help at various stages of this paper, and to Frank Mcsherry for answering queries related to Differential Dataflow. We thank the anonymous reviewers for their valuable comments. This research was funded by a grant from Waterloo-Huawei Joint Innovation Laboratory.

\bibliographystyle{ACM-Reference-Format}
\bibliography{references}

\end{document}